\DeclareMathOperator*{\indep}{\perp \!\!\! \perp}
\def\supp{\textnormal{supp}}
\def\epspml{\epsilon}
\def\epsldp{\epsilon_r}
\def\e{e}
\def\epsilon{\varepsilon}
\theoremstyle{definition}
\newtheorem{definition}{Definition}
\newtheorem{lemma}{Lemma}
\newtheorem{theorem}{Theorem}
\newtheorem{corollary}{Corollary}
\newtheorem{proposition}{Proposition}
\newtheorem{example}{Example}
\newtheorem{remark}{Remark}
\newcommand{\modified}[1]{{\color{black} #1}}
\begin{document}

\title{Extremal Mechanisms for Pointwise Maximal Leakage
}

\author{Leonhard~Grosse,~\IEEEmembership{Student~Member,~IEEE,}
    Sara~Saeidian,~\IEEEmembership{Member,~IEEE,}
    Tobias~J.~Oechtering,~\IEEEmembership{Senior~Member,~IEEE}\thanks{\modified{The work has been supported by the Swedish Research Council (VR) under grant 2023-04787 and  project DataLEASH of the Digital Future center funded by the Swedish government. Leonhard Grosse, Sara Saeidian and Tobias J. Oechtering are with the Division of Information Science and Engineering, School of Electrical Engineering and Computer Science, KTH Royal Institute of Technology, 100 44 Stockholm, Sweden, email: \{lgrosse, saeidian, oech\}@kth.se}.}}


\maketitle
\vspace{-2cm}
\begin{abstract}
Data publishing under privacy constraints can be achieved with mechanisms that add randomness to data points when released to an untrusted party, thereby decreasing the data's utility. In this paper, we analyze this privacy-utility tradeoff for the pointwise maximal leakage \modified{(PML)} privacy measure and \modified{provide optimal privacy mechanisms for} a general class of convex utility functions. \modified{PML} was recently proposed as an operationally meaningful privacy measure based on two equivalent threat models: An adversary guessing a randomized function and an adversary aiming to maximize a general gain function. \modified{We prove a cardinality bound, showing that output alphabets of optimal mechanisms in this context need not to be larger than the size of their inputs. Then, we characterize the optimization region as a (convex) polytope.} We derive closed-form optimal privacy \modified{mechanisms for arbitrary priors in the high privacy regime (when the privacy parameter is sufficiently small) and uniform priors for all ranges of the privacy parameter} using tools from convex analysis. Furthermore, we present a linear program that can compute optimal mechanisms for PML in a general setting. \modified{We conclude by demonstrating the performance of the closed-form mechanisms through numerical simulations.}
\end{abstract}


\section{Introduction}
As policymakers are tasked with writing legislation to limit the negative influence of actors that are using individuals' personal data, the concept of provable privacy guarantees has moved into focus as a tool for better design and easier policing of electronic data processing systems \cite{nissim2018philosophicalprivacy,nissim2021codesignlawandprivacy}. To do this, numerous privacy measures have been proposed across different domains, each with its own strengths and limitations. With implementations in systems by Google \cite{googleRAPPOR} and Apple \cite{appleDP}, among others, \emph{differential privacy} (DP) \cite{dwork2014algorithmic} and its local variant, \emph{local differential privacy} (LDP) \cite{evfimievski2003limiting,kasiviswanathan2011can} are today often used in practical implementations. The privacy guarantee of differential privacy hinges on hiding participation: The outcome of any differentially private data release does not change significantly whether or not a specific individual's data is included in the analysis. This approach has previously been argued to define privacy as a \emph{causal} property of the processing algorithms \cite{tschantz2020sok}. While this interpretation conceptually poses a strong notion of privacy, it has been pointed out that in modern data processing systems, an \emph{associative} view of privacy would be desirable \cite{ghosh2016inferential}. Several works argue that such a guarantee from differential privacy requires independence assumptions on the database entries \cite{nofreelunch,yang2015bayesian,zhu2014correlated}. Another critique on differential privacy concerns its parameter and the parameter's relation to the provided privacy guarantee: In practice, differential privacy does not provide any clear guideline for how to pick the privacy level in order to achieve the desired privacy protection. In fact, a recent survey among system designers by \citet{dwork2019exposeyourepsilons} shows that the privacy parameter in real implementations is often picked arbitrarily. Works like \cite{ghosh2016inferential,nofreelunch} therefore argue in favor of adopting inferential guarantees, that is, guarantees that ensure that an adversaries knowledge does not change significantly from her prior knowledge upon observing the outcome of a mechanism.

Parallel to the works on differential privacy, a wide array of privacy measures have been proposed in the information theory literature. \modified{Many of these measures put forward a notion of \emph{information leakage}, quantified by various statistical quantities. The earliest example of this is mutual information \cite{asoodeh2015miprivacy,wang2016miprivacy}. While mutual information has a central role in communication theory, \citet{smith2009foundations} argues that the value of mutual information can be counter-intuitive in certain privacy problems. Other works like \cite{liao2019tunable} discuss generalizations of mutual information due to \citet{arimoto1977information} and \citet{sibson1969information} as privacy measures. These measures arise naturally when assessing privacy risks in specific threat models. Another line of work aims for a definition of privacy more in line with LDP, called \emph{local information privacy} (LIP) \cite{jiang2020LIP}. LIP imposes a symmetric upper and lower bound on the \emph{information density} between the secret and the released random variable. Later, LIP was generalized to allow for asymmetric bounds on the information density in \cite{zarrabian2023lift}}. Other approaches use the probability of correctly guessing \cite{asoodeh2018estimation} and various $f$-divergences \cite{rassouli2019TVDprivacy,8825803} as privacy measures. For a detailed survey of privacy measures, see \citet{bloch2021inftheoprivacyoverview} and \citet{wagner2019privacyoverview}. 

\modified{Among the notions of information leakage mentioned above}, \emph{operational} privacy measures pose promising alternatives to the de facto standard of (local) differential privacy. Operational measures of information leakage provide definitions of privacy building on concrete statistical threat models. These threat models have the advantage of making the type of privacy provided by a measure directly explainable to stakeholders. Further, since any assumptions made in the privacy guarantee are explicit in the model, operational measures avoid confusion about what type of privacy is or is not promised. One such operationally meaningful notion is \emph{maximal leakage}. 
\modified{\citet{IssaMaxL} define maximal leakage \modified{as the} \emph{average} information \modified{leaking to an adversary that aims to guess a randomized function of the secret.} Similarly, \citet{alvim2012MLviaGainFunctions} consider an adversary who aims to construct a guess of the secret that maximizes an arbitrary non-negative gain function. These two formulations can be shown to result in the same information leakage measure, that is, both operational definitions admit the same simplified quantity.}

While maximal leakage has \modified{these} strong operational \modified{foundations}, the fact that it is an on-average measure of information leakage may limit its applicability. Specifically, in \cite[Section 1]{pml}, the authors argue that averaging over all outcomes as done for maximal leakage \modified{may not provide sufficiently strong guarantees in privacy critical applications}. 
In addition, it was observed in \cite{IssaMaxL} that when the sensitive random variable $X$ takes values in an infinite set (e.g., the set of real numbers) maximal leakage can become infinite even in common scenarios such as adding Gaussian noise to Gaussian private data. To overcome these shortcomings, \citet{pml} propose \emph{pointwise maximal leakage} (PML), a generalization of maximal leakage. PML builds on similar threat models as maximal leakage via randomized functions and gain functions, but measures the information leaking about the private data $X$ at every realization of the public data $Y$ in isolation. As such, PML defines a random variable that describes the statistics of the information leaking about the private data \modified{and therefore} allows for highly flexible privacy guarantees: \modified{Various ways of assessing privacy can be expressed by considering different} statistics of \modified{the PML} random variable. 
Moreover, it is shown in \cite{saeidian2023pointwisegeneral} that PML can be used to make useful statements about the privacy of various systems in which maximal leakage becomes infinite, including the setup of adding Gaussian noise to a Gaussian random variable.  

Interestingly, \citet{saeidian2023inferential} also show that unlike differential privacy, PML provides clear guidelines for privacy parameter selection: For any given prior distribution, the chosen level of privacy determines the maximum amount of information (in terms of \emph{min-entropy}, that is, Rényi entropy of order infinity \cite{renyi1961entropy}) of any attribute of $X$ that can be disclosed by a privacy mechanism. This result constitutes a significant step towards the interpretability of privacy guarantees. Further, it offers a promising outlook on system design and policing, as privacy guarantees can be directly evaluated in terms of disclosure limits for each context. 

In order to design systems in accordance with PML, it is beneficial to provide optimal randomization strategies for achieving PML privacy, while keeping the privatized data as useful as possible for non-malicious inference. We will refer to this problem as the \emph{mechanism design problem}. The origin of privacy mechanism design can be attributed to the field of database privacy, in which (global) differential privacy \cite{dwork2014algorithmic} is by far the most prevalent measure used for trading off privacy against utility. In the global model, noise is added dynamically by query after the data is collected by a trusted curator, who has access to the complete private dataset. For this setup, perturbation mechanisms like the Laplacian and the Gaussian mechanisms have been shown to \modified{efficiently trade off privacy and utility} in various scenarios \cite{DPoriginalpaper,dwork2014algorithmic}. This paper deals with a \emph{local} model of privacy, that is, a model in which there is no trusted data curator, and privatization by randomization needs to be done locally (by each user) before releasing a data point. As the earliest example of a randomization strategy in such a local model, \citet{warnerRRoriginal} proposes a randomization strategy he refers to as the \emph{randomized response} technique. For simple binary cases, this technique has been shown to be optimal for local differential privacy and a broad class of convex utility functions \cite{extremalmechanismLong}. The discrete mechanism design problem has since been studied with many different privacy and utility measures, as well as more general source alphabets (see Section \ref{sec:intro:subsec:related}). 

In this paper, we explore the mechanism design problem for the local model with PML. We believe that the strong operational meaning of PML and its flexibility as well as its useful properties in terms of composition, pre-processing and post-processing~\cite{pml} make it a powerful framework for both analysis and design of private systems. Further, these properties give a promising perspective on privacy-by-design that is more easily aligned with \modified{formal} definitions as they are needed for effective legislature, as well as more holistic privacy in data-intensive applications.

\subsection{Overview and Contributions}
This paper presents various solutions to the mechanism design problem with PML, considering the sub-class of convex utility function previously presented in \cite{extremalmechanismLong}, which we call \emph{sub-convex} utility functions. 
Our proofs exploit general methods from convex analysis and majorization theory~\cite{majorization}.
\modified{
We briefly summarize our contributions as follows:
\begin{itemize}
\item \textbf{Cardinality bound}. We show that a mechanism maximizing sub-convex utility subject to a PML constraint does not need to increase the output alphabet size compared to the input.
\item \textbf{Characterization of the optimization problem}. We characterize the region of mechanisms satisfying $\epsilon$-PML for a fixed value of $\epsilon \geq 0$ as a convex polytope.
\item \textbf{Closed-form optimal mechanisms}. We present closed-form optimal mechanisms in the special cases of
\begin{enumerate}[label=(\roman*)]
    \item binary sensitive data,
    \item sensitive data with an arbitrary but finite alphabet in the \emph{high-privacy regime} (when the privacy parameter is sufficiently small), and
    \item uniformly distributed sensitive data. 
\end{enumerate}
\item \textbf{Optimal mechanisms via a linear program}. We present a linear program for computing optimal mechanisms in general scenarios. That is, the distribution of the sensitive data and the privacy level can both be picked arbitrarily.
\end{itemize}
}

\subsection{Other Related Works}
\label{sec:intro:subsec:related}
The privacy-utility tradeoff problem in the local setup has been studied in various works for different combinations of privacy and utility measures. To start with, mechanism design for the popular concept of LDP has been studied with utility measures such as Hamming distortion \cite{kalantari2018hamming,wang2016miprivacy}, minimax risk \cite{duchi2013LDPminmaxDEF} and the previously mentioned sub-convex functions \cite{extremalmechanismLong}, which include, e.g., mutual information. While LDP is not context-aware, a context-aware framework for mechanism design with LDP has been proposed in~\cite{acharya2020context}. The privacy-utility tradeoff has also been extensively studied using information theoretic measures. For example, \citet{8849440} present what they call a \emph{watchdog mechanism} that leverages \modified{LIP} to evaluate the risk of a privacy breach any data sample presents and adapts the privatizing randomization strategy accordingly. In \cite{9965910}, these watchdog mechanisms are adapted to satisfy an extension of \modified{LIP} to further enhance utility. Other \modified{LIP} mechaniss are designed to minimize expected distortion in \cite{jiang2020LIP,jiang2021LIPcontextaware} \modified{and linear distance measures in \cite{9517826}}. A linear programming approach for designing optimal LIP mechanisms is presented in \cite{lopuhaa2020data}. Further, mechanisms for maximal leakage are designed with utility measures like Hamming distortion~\cite{saeidian2021hamming}, upper triangular cost matrices~\cite{9162276}, and the Type-II error exponent in a hypothesis testing framework~\cite{8006634}. 

\modified{
\subsection{Outline  of the Paper}
The rest of the paper is organized as follows: In Section \ref{sec:prelim} we will review the definition of PML, as well as some basic definitions in majorization theory. In Section \ref{sec:tradeoffproblem} we will present and analyze the general optimization problem considered in this paper. Section \ref{sec:optsol} presents the results on optimal mechanism design. Section \ref{sec:conclusions} concludes the paper. 
}

\section{Preliminaries}
\label{sec:prelim}
\subsection{Notation}
\label{sec:prelim:notation}
Generally, lowercase boldface letters denote vectors, while uppercase boldface letters denote matrices, e.g., $\boldsymbol{p} \in \mathbb{R}^N$, $\boldsymbol{Q} \in \mathbb{R}^{N\times N}$. We denote the $j^{\text{th}}$ column of a matrix $\bm P$ as $\bm P_j$. Single elements of a matrix $\bm P$ are denoted by the corresponding lower-case indexed letter $p_{ij}$. \modified{$\boldsymbol{I}_N$ denotes the identity matrix of size $N$.} This paper focuses on finite random variables, and as a result, all sets are assumed to be finite. Random variables are represented using uppercase letters, such as $X$, while uppercase calligraphic letters represent sets, such as the alphabet of $X$, which is denoted by $\mathcal{X}$. Given random variables $X$ and $Y$, $P_{XY}$ is used to indicate their joint probability distribution, while $P_X$ and $P_Y$ denote the marginal distributions of $X$ and $Y$, respectively. The conditional probability kernel $P_{Y|X}$ is referred to as the \emph{privacy mechanism}. We assume that $|\mathcal{X}| = N$ and $|\mathcal{Y}|=M$. 
We use $\supp(P_X) = \{x \in \mathcal{X}: P_X(x) > 0\}$ to denote the support set of the distribution $P_X$. Unless stated otherwise, we assume a random variable to have full support on its alphabet, that is, $\supp(P_X) = \mathcal{X}$. For notational convenience, we assume that the set $\mathcal{X} = \{x_1,\dots,x_N\}$ is ordered in non-increasing probability, that is, $P_X(x_1) \geq \dots \geq P_X(x_N)$. \modified{Finally,} we use $[N]$ to denote the set of positive integers up to $N$, that is, $[N] \coloneqq \{1,\dots,N\}$.

\subsection{Pointwise Maximal Leakage}
\label{sec:prelim:pml}

We consider the random variable $X$ to be the private data. A mechanism $P_{Y|X}$ then privatizes (that is, randomizes) this sensitive data and outputs a sanitized view of $X$, denoted by $Y$. \modified{We measure the amount of information each outcome $Y=y$ leaks about the private data $X$} using the pointwise maximal leakage (PML) measure proposed by~\citet{pml}. 

Although it has two equivalent operational definitions via randomized functions and generalized gain functions, PML admits a simple formulation. We start by introducing the operational formulations.
\begin{definition}[Pointwise maximal leakage (PML), {{\cite{pml}}}]
    Let $P_{XY}$ be the joint distribution of two random variables defined on the finite set $\mathcal{X} \times \mathcal{Y}$. Suppose the Markov chain $U - X - Y - \hat{U}$ holds. Then the \emph{pointwise maximal leakage from $X$ to an outcome $y \in \mathcal{Y}$} is defined as
    \begin{equation}
    \label{eq:PMLdefRandFunc}
            \ell(X \to y) \coloneqq 
    \log \sup_{P_{U \mid X}}\frac{\sup\limits_{P_{\hat U \mid Y=y}} \mathbb{P} \left[U=\hat U \mid Y=y \right]}{\max\limits_{u\in \mathcal{U}} P_U(u)}.
    \end{equation}
\end{definition}
In this definition, information leakage is measured by the relative increase in the probability of correctly guessing an attribute $U$ of the private data $X$ when observing $Y=y$, compared to a ``blind" guess made without observing $Y$. As shown in \cite{pml}, this formulation is equivalent to another operational formulation: Assume an adversary picks her guess of $X$ from a non-empty set $\hat{\mathcal{X}}$. Assume further that she measures the gain she gets from the guess via a function $g: \mathcal{X} \times \hat{\mathcal{X}} \to \mathbb{R}_+$. Then the randomized function view of PML in \eqref{eq:PMLdefRandFunc} can be shown to be equivalent to the worst-case increase in expected gain the adversary gets from observing $Y$, that is,
\begin{equation}
    \ell(X \rightarrow y) = \log \sup_{g}
    \frac{\sup\limits_{P_{\hat{X} \mid Y=y}} \mathbb{E} \left[g(X,\hat{X}) \mid Y=y \right]}{\sup\limits_{\hat x \in \hat{\mathcal X}} \mathbb E\left[g(X, \hat{x})\right]}.
\end{equation}
In \cite[Theorem 1]{pml}, it is shown that in the case of finite alphabets, these equivalent definitions are given by the maximum information density $i_{P_{XY}}(x;y) \coloneqq \log \frac{P_{XY}(x,y)}{P_X(x) \, P_Y(y)}$ of the joint distribution $P_{XY}$ of $X$ and $Y$ considering all outcomes of $X$:
\begin{align}
\label{eq:simplyfiedPML}
    \ell(X\rightarrow y) &= \log \max_{x \in \supp(P_X)} \frac{P_{Y|X=x}(y)}{P_Y(y)}\\ &= \max_{x \in \supp(P_X)}i_{P_{XY}}(x;y).
\end{align}
\modified{We always have $\ell(X \to y) \geq 0$. In the finite alphabet case, assuming a fixed $P_X$}, the PML is also upper bounded by $\ell(X\rightarrow y) \leq -\log(\min_{x \in \supp(P_X)}P_X(x))$, implying that \modified{it} remains finite. We use $\epsilon_{\max} \coloneqq  -\log(\min_{x \in \supp(P_X)}P_X(x))$ to denote this upper bound. 

Since PML is defined separately for each outcome $y$, the leakage $\ell(X \to Y)$ becomes a random variable when considering $Y \modified{\sim P_Y}$. In order to provide a strict privacy guarantee, we consider the \emph{almost-sure guarantee} \cite[Definition 4]{pml}: This definition bounds the leakage of all outcomes of $Y$ as $\mathbb{P}_{Y \sim P_Y}[\ell(X \rightarrow Y) \leq \epsilon] = 1$. Any mechanism satisfying this property is said to satisfy \emph{$\epsilon$-PML}. From a design perspective, this is equivalent to restricting the leakage of each outcome of $Y$ separately to be smaller than the required privacy level $\epsilon$. Obviously, all mechanisms satisfy $\epsilon_{\mathrm{max}}$-PML.


\subsection{Majorization Theory}
\label{sec:prelim:majorization}
In this section, we will restate a few key definitions of majorization theory. Majorization theory provides a partial order on sets of elements with equal cardinality and equal sum, and can therefore be seen as a way of measuring the ``uniformity" of a pmf. In the context of this paper, we will leverage majorization theory to analyze the behavior of privacy guarantees concerning the data's prior distribution. For a detailed discussion on majorization theory, we refer to \cite{majorization}. All statements listed below can be found there.
\begin{definition}[Majorization]
Given a tuple $x \in \mathbb{R}^N$ and $i \in \{1, \ldots, N\}$, denote by $x_{(i)}$ the $i^\text{th}$ largest element of $x$. Consider two tuples $p, q \in \mathbb{R}^N$. We say that $p$ \emph{majorizes} $q$, written as $p \succ q$, if $\sum_{i=1}^N p_i = \sum_{i=1}^N q_i$, and for all $k = 1,\dots,N-1$: $\sum_{i=1}^k p_{(i)} \geq \sum_{i=1}^k q_{(i)}$.
\end{definition}
As an example, if $p=(\nicefrac{1}{3},\nicefrac{1}{3},\nicefrac{1}{3})$ and $q=(\nicefrac{2}{3},\nicefrac{1}{3},0)$, then $q \succ p$.

\begin{definition}[Schur-convex / Schur-concave function]
A function $\phi: \mathbb{R}^N \rightarrow \mathbb{R}$ is said to be \emph{Schur-convex}, if $p \succ q  \, \Rightarrow \phi(p) \geq \phi(q)$. Further, $\phi(p)$ is said to be \emph{Schur-concave} if and only if $-\phi(p)$ is Schur-convex.
\end{definition}
For example, $\max (\cdot)$ is a Schur-convex function while $\min (\cdot)$ and the Rényi entropy~\cite{renyi1961entropy} are Schur-concave.

\section{The Privacy-utility Tradeoff Problem}

\label{sec:tradeoffproblem}
The aim of this section is twofold: \modified{Firstly, we present important results needed for finding optimal mechanisms in the PML framework. 
We prove a cardinality bound on the output alphabet of the optimal mechanisms and show that we can without loss of generality assume an optimal mechanism to have at most full output support, i.e., $\supp(P_Y) \leq \supp(P_X)$. Using this fact, we fully characterize the optimization region as a polytope, one of whose vertices constitutes an optimal solution to the privacy-utility tradeoff problem. In the second part of this section, we utilize the PML framework to analyze the prior-distribution dependence of the privacy guarantees provided by the randomized response mechanism optimized for LDP. 
This will enable a more realistic comparison of mechanism performances, as the results allow us to pick the parameter of the randomized response mechanism to exactly achieve a privacy guarantee specified in terms of $\epsilon$-PML.}

\subsection{The Mechanism Design Problem}
\label{sec:tradeoffproblem:mechdesign}
We consider a general discrete privacy mechanism $P_{Y \mid X}$ mapping $N$ input symbols to $M$ sanitized output symbols. For simplicity, we use a matrix
\begin{equation}
\label{eq:generalmechansim}
    \bm P = \begin{bmatrix}
        p_{11} & \dots & p_{1M} \\
        \vdots & \ddots & \vdots \\
        p_{N1} & \dots & p_{NM}
    \end{bmatrix} \in \mathbb [0,1]^{N \times M},
\end{equation}
to represent the privacy mechanism, where $p_{ij} \coloneqq P_{Y|X=x_i}(y_j)$. Evidently, in order to \modified{form} a valid transitioning kernel, this matrix needs to be row-stochastic, and its elements need to be bounded by $0 \leq p_{ij} \leq 1$. In Section~\ref{sec:tradeoff:subsec:optregion}, we will derive more detailed constraints on $\bm P$ including the ones imposed by PML. We also use $\bm P_j$ with $j \in [M]$ to denote the $j^{\mathrm{th}}$ column of $\bm P$. 
 
\subsubsection{Sub-convex utility functions}
We measure the utility of the privatized data using a \modified{rich} sub-class of convex functions previously studied in \cite{extremalmechanismLong}, which we will refer to as \emph{sub-convex} utility functions.

\begin{definition}[Sub-convex function, \cite{extremalmechanismLong}]
\label{def:row-sum-sub-linear}
A function $U: \mathbb R_+^{N \times M} \to \mathbb R_+$ is said to be \emph{sub-convex} if it has the form 
\begin{equation}
U(\boldsymbol{P}) = \sum_{j =1}^M\mu(\bm P_j),
\end{equation}
where $\mu : \mathbb R_+^N \to \mathbb R_+$ is a sub-linear function.\footnote{A function $\mu: \mathbb R_+^N \to \mathbb  R_+$ is said to be sub-linear if $\forall \, \lambda \in \mathbb R_+$ and $\forall \bm x, \bm y \in \mathbb R^N_+$ we have $\mu(\lambda \bm x) = \lambda \mu(\bm x)$ and $\mu(\bm x + \bm  y) \leq \mu(\bm x) + \mu(\bm y)$. These properties together also imply convexity.}
\end{definition}

\modified{It is shown in \cite{extremalmechanismLong} that sub-convex functions according to Definition \ref{def:row-sum-sub-linear} satisfy a data processing inequality. This property will be needed for proving optimality in Theorems \ref{thrm:highprivacymechanism} and \ref{thrm:allregionseqprioropt}.}
The class of sub-convex functions includes, \modified{e.g., any $f$-divergence between marginal distributions induced by two candidate hypotheses, as well as any $f$-information between private and public data. In what follows,} the main instance of \modified{sub-convex} functions that we will use for illustrations is the mutual information $I(X;Y)$ between the private and the released data, defined as the Kullback-Leibler divergence between the joint distribution and the marginals of the two random variables, that is, $I(X;Y) = D_{\text{KL}}(P_{XY}\|P_X P_Y)$. 

\begin{remark}
\label{remark:equivalenceclass}
It is clear by Definition~\ref{def:row-sum-sub-linear} that column permutations of a mechanism $\bm P$ do not affect its utility. Formally, given row-stochastic matrices $\bm P, \bm Q \in \mathbb [0,1]^{N \times M}$, we may define an equivalence relation $\bm P \sim \bm Q$, where $\bm Q$ is obtained by permuting the columns of $\bm P$. Then, all mechanisms in the equivalence class $[\bm P]$ achieve the same utility. 
\end{remark}

\subsubsection{Optimization problem formulation}
We now present the \modified{fundamental} optimization problem considered in this paper. For a fixed $P_X$, we define 
\begin{equation}
    \epsilon_m (\bm P) \coloneqq \inf \Big\{\epsilon \geq 0 \colon \mathbb P\left[\ell(X \to Y) \leq \epsilon \right]=1 \Big\},
\end{equation}
to be the smallest value of $\epsilon \geq 0$ at which the mechanism $\bm P$ satisfies $\epsilon$-PML. Let $\mathcal S_{N,M} \subset \mathbb [0,1]^{N \times M}$ denote the set of all $N \times M$ row-stochastic matrices. Given $\epsilon \geq 0$, we then define  
\begin{equation}
    \mathcal M(\epsilon) \coloneqq \left\{\bm P \in  \bigcup_{M=1}^\infty \mathcal S_{N,M} \colon \epsilon_m(\bm P) \leq \epsilon \right\},
\end{equation}
to be the set of all privacy mechanisms with $N$ input symbols (i.e., rows) that satisfy $\epsilon$-PML. Then, our privacy-utility tradeoff problem can be expressed as 
\begin{equation}
\label{eq:generalOPTproblem}
    U^*(\epsilon) \coloneqq \sup_{\bm P\in \mathcal M(\epsilon)} U(\bm P), 
\end{equation}
where our goal is to find the largest utility $U^*(\epsilon)$ for a fixed privacy parameter $\epsilon$. We use both $P^*_{Y \mid X}$ and $\bm P^*$ to denote the optimal mechanism in the above problem.

\subsection{Cardinality Bound}
\label{sec:optsol:subsec:cardinality}
\modified{Problem~\eqref{eq:generalOPTproblem} does not make any assumptions on the size of $\mathcal Y$. In principle, we yet have no reason to assume that utility could not increase with increasing $M=|\mathcal Y|$.} To solve problem~\eqref{eq:generalOPTproblem}, we therefore first show that the search for an optimal mechanism can be restricted to mechanisms that do not increase the output alphabet size compared to the input.
\begin{theorem}[Cardinality bound]
\label{thrm:cardinality}
    To solve the optimization problem in \eqref{eq:generalOPTproblem}, it suffices to consider mechanisms $P_{Y|X}$ such that $|\mathcal{Y}| \leq |\mathcal{X}|$ holds.
\end{theorem}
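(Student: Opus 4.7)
The plan is to show that whenever $M = |\mathcal{Y}| > N$, one can reduce $M$ by one while keeping the mechanism in $\mathcal M(\epsilon)$ and not decreasing its utility; iterating then yields $M \leq N$. The key reduction step decouples each output symbol into a ``direction'' that carries the PML constraint and a ``weight'' that carries row-stochasticity and utility, and then perturbs only the weights along an affine dependence among the directions.

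Concretely, after discarding any output $y_j$ with $P_Y(y_j)=0$ (such a column has $\bm P_j = \bm 0$ and contributes $0$ to $U(\bm P)$ by positive homogeneity of $\mu$), set $\bm \alpha_j \coloneqq \bm P_j / P_Y(y_j)$. By the simplified PML expression in \eqref{eq:simplyfiedPML}, $\epsilon$-PML on the outcome $y_j$ is exactly $\alpha_{ij} \leq e^\epsilon$ for all $i$, and the very definition of $P_Y$ forces $\sum_i P_X(x_i)\,\alpha_{ij} = 1$. Hence each $\bm \alpha_j$ lies in the $(N-1)$-dimensional affine hyperplane $H \coloneqq \{\bm x \in \mathbb R^N : \sum_i P_X(x_i)\, x_i = 1\}$ (intersected with $[0,e^\epsilon]^N$), and row-stochasticity of $\bm P$ becomes the affine representation $\sum_{j} P_Y(y_j)\, \bm \alpha_j = \bm 1$ with $\sum_j P_Y(y_j) = 1$. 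If $M > N$, the $M$ points $\bm \alpha_1,\dots,\bm \alpha_M$ in the $(N-1)$-dimensional space $H$ are affinely dependent, so there exist $(\lambda_j)_{j=1}^M$, not all zero, with $\sum_j \lambda_j = 0$ and $\sum_j \lambda_j\, \bm \alpha_j = \bm 0$.

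Next I would perturb only the weights: for $t \in \mathbb{R}$, let $\widetilde P_Y(y_j) \coloneqq P_Y(y_j) + t\,\lambda_j$ and $\widetilde{\bm P}_j \coloneqq \widetilde P_Y(y_j)\, \bm \alpha_j$. Because the $\bm \alpha_j$'s are untouched, each column still satisfies $\widetilde p_{ij}/\widetilde P_Y(y_j) = \alpha_{ij} \leq e^\epsilon$, and the affine-dependence identities guarantee that $\widetilde{\bm P}$ is row-stochastic with marginal $\widetilde P_Y$; thus $\widetilde{\bm P} \in \mathcal M(\epsilon)$ whenever all $\widetilde P_Y(y_j) \geq 0$. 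By positive homogeneity of $\mu$, the utility is \emph{affine} in $t$, namely $U(\widetilde{\bm P}) = U(\bm P) + t\sum_j \lambda_j\, \mu(\bm \alpha_j)$. Choose the sign of $t$ so that this expression is non-decreasing, and grow $|t|$ until some $\widetilde P_Y(y_{j^*}) = 0$; since $\sum_j \lambda_j = 0$ and not all $\lambda_j$ vanish, both positive and negative $\lambda_j$'s exist and such a zero is reached in either direction. Dropping the column $y_{j^*}$ reduces $M$ by one while preserving feasibility and utility, and iterating produces a mechanism with $|\mathcal{Y}| \leq N$.

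The main obstacle is the simultaneous bookkeeping of feasibility (both PML and row-stochasticity) and utility under the perturbation; the direction/weight split is what makes this tractable, since after normalization PML depends only on the directions $\bm \alpha_j$, row-stochasticity only on the weights $P_Y(y_j)$, and sub-linearity of $\mu$ makes utility a \emph{linear} function of those weights. Once the three constraints are separated in this way, the reduction is an elementary convex-geometry move on the weights, in the same spirit as the standard proof of Carath\'eodory's theorem.
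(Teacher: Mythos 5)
Your proof is correct, and it is the same perturbation-method argument the paper uses, recast in a cleaner normalization. The paper perturbs the joint distribution multiplicatively, $P^{\gamma}_{XY}(x,y)=P^{*}_{XY}(x,y)\bigl(1+\gamma\,\phi(y)\bigr)$, subject to $\mathbb{E}_{Y\sim P^{*}_{Y\mid X=x}}[\phi(Y)]=0$ for all $x\in\mathcal X$; under the substitution $\lambda_j = P^{*}_Y(y_j)\,\phi(y_j)$ these $N$ conditions are precisely your $\sum_j \lambda_j\,\alpha_{ij}=0$ for all $i$, and $\sum_j\lambda_j=0$ then follows because each $\bm\alpha_j$ satisfies $\sum_i P_X(x_i)\,\alpha_{ij}=1$. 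Your direction/weight split makes it immediate why the perturbation is PML-safe (the lift vectors $\bm\alpha_j$ never move) and why utility is linear in the perturbation (positive homogeneity of $\mu$ makes $U$ linear in the weights). It also gives a modest simplification over the paper's version: the paper fixes $M>N$, invokes existence of a maximizer $\bm P^*$ for that $M$, and uses optimality to force $\partial U/\partial\gamma=0$ so that $U$ stays constant along the perturbation; you avoid the optimality argument entirely by picking the sign of $t$ in which $U$ is non-decreasing, which is the elementary Carath\'eodory move. Both arguments then push the parameter until some $\widetilde P_Y(y_{j^*})=0$ (possible in either direction because $\sum_j\lambda_j=0$ forces $(\lambda_j)$ to take both signs) and iterate. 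Your explicit pre-step of discarding columns with $P_Y(y_j)=0$ is needed to define $\bm\alpha_j$ and is correctly handled; the paper's joint-distribution parametrization avoids that division but drives a column to zero in the same way.
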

The proof of this theorem, which is based on the \emph{perturbation method} \modified{\cite{gohari2012evaluation,el2011networkinformationtheory}}, is deferred to Appendix~\ref{app:proofOfCardinality}. Equipped with this theorem, we can now assume that $N \geq M$. Since all such $N \times M$ mechanisms can be written as an $N \times N$ mechanism that contains $N-M$ all-zero columns, without loss of generality in the rest of the paper we restrict our attention to $N \times N$ row-stochastic matrices. More formally, we define 
\begin{equation}
    \mathcal M_N(\epsilon) \coloneqq \left\{\bm P \in  \mathcal S_{N,N} \colon \epsilon_m(\bm P) \leq \epsilon \right\},
\end{equation}
to be the set of all $N \times N$ privacy mechanisms that satisfy $\epsilon$-PML. Then, by Theorem~\ref{thrm:cardinality}, for all $\epsilon \geq 0$ we have 
\begin{equation}
\label{eq:optim_prob_cardinality}
    U^*(\epsilon) = \max_{\bm P \in \mathcal M_N(\epsilon)} U(\bm P). 
\end{equation}

\subsection{Characterization of the Optimization Region}
\label{sec:tradeoff:subsec:optregion}

In order to obtain methods for efficiently computing optimal mechanisms, it is useful to express the privacy constraint as a collection of linear inequalities. 
\begin{lemma}
\label{lemma:polytope}
Given any privacy level $\epsilon \geq 0$ and a prior distribution $P_X$, the set $\mathcal{M}_N(\epsilon)$ is a closed and bounded polytope in $[0,1]^{N \times N}$, described by the linear constrains 
\begin{subequations}
\label{eq:constraints}
\begin{align}
        p_{ij} - \biggl(\sum_{i=1}^N P_X(x_i) p_{ij}\biggr)\e^{\epsilon} \leq 0, \quad  \forall i,j \in [N],
        \label{subeq:pmlconstraints} \\
        \sum_{j=1}^N p_{ij} = 1, \quad \quad \forall i \in [N],
        \label{subeq:sumconstraint} \\
        p_{ij} \geq 0, \quad  \forall i,j \in [N].
        \label{subeq:nonnegativityconstraint}
\end{align}
\end{subequations}    
\end{lemma}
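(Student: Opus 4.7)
My plan is to directly translate the almost-sure $\epsilon$-PML guarantee into the linear inequality system \eqref{subeq:pmlconstraints}, then argue that the resulting feasible set is a polytope.

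First I would unpack the almost-sure PML definition for a mechanism $\bm P$ with fixed prior $P_X$. By the simplified expression \eqref{eq:simplyfiedPML}, the condition $\mathbb{P}_{Y \sim P_Y}[\ell(X \to Y) \leq \epsilon] = 1$ is equivalent to requiring, for every outcome $y_j$ with $P_Y(y_j) > 0$, that
\begin{equation}
\max_{x_i \in \supp(P_X)} \frac{p_{ij}}{P_Y(y_j)} \leq e^\epsilon,
\end{equation}
which after multiplying by $P_Y(y_j) = \sum_{i} P_X(x_i) p_{ij}$ and taking the maximum over $i$ gives exactly \eqref{subeq:pmlconstraints} for those $j$. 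Next I would handle the degenerate case $P_Y(y_j)=0$: since $P_X$ has full support by assumption, $\sum_i P_X(x_i) p_{ij} = 0$ forces $p_{ij}=0$ for all $i$, so \eqref{subeq:pmlconstraints} reduces to $0 \leq 0$ and is vacuously satisfied. Thus the PML privacy constraint is equivalent to imposing \eqref{subeq:pmlconstraints} simultaneously for all $i,j \in [N]$, with no case distinction needed.

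Then I would note that \eqref{subeq:sumconstraint} and \eqref{subeq:nonnegativityconstraint} simply encode row-stochasticity of $\bm P$, so every $\bm P \in \mathcal{M}_N(\epsilon)$ satisfies the full linear system; the converse inclusion is immediate since any $\bm P$ satisfying \eqref{subeq:sumconstraint}--\eqref{subeq:nonnegativityconstraint} is a valid conditional kernel, and satisfying \eqref{subeq:pmlconstraints} implies $\epsilon$-PML by retracing the above equivalences. Finally, since the set is defined by finitely many linear (in)equalities in the entries $p_{ij}$, it is a polytope; closedness follows from being an intersection of closed half-spaces, and boundedness is immediate from $0 \leq p_{ij}$ together with $\sum_j p_{ij} = 1$, which jointly force $p_{ij} \in [0,1]$.

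The proof is essentially routine algebra. The only subtle point is confirming that the linear formulation \eqref{subeq:pmlconstraints} correctly captures the almost-sure guarantee even at outcomes outside $\supp(P_Y)$; I would address this explicitly using the full-support assumption on $P_X$, since otherwise one would have to intersect the polytope with a support-specific face. No deeper machinery appears to be required beyond the simplified PML expression \eqref{eq:simplyfiedPML}.
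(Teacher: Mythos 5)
Your proof is correct and follows essentially the same route as the paper's: clear the denominator $P_Y(y_j)=\sum_i P_X(x_i)p_{ij}$ to turn each PML bound into the linear half-space constraints \eqref{subeq:pmlconstraints}, combine with the row-stochasticity hyperplanes and nonnegativity constraints, and conclude that the intersection of finitely many closed half-spaces and hyperplanes within $[0,1]^{N\times N}$ is a closed, bounded polytope. The one thing you add that the paper leaves implicit is the explicit check of the degenerate outcomes with $P_Y(y_j)=0$ (where the original ratio is undefined but the cleared-denominator inequality holds vacuously since full support of $P_X$ forces $p_{ij}=0$ for all $i$); this is a harmless tightening of rigor rather than a different argument, though your phrase ``taking the maximum over $i$'' should read ``for all $i$,'' since \eqref{subeq:pmlconstraints} imposes the inequality per index $i$ rather than only at the maximizer.
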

The proof of this Lemma is provided in Appendix \ref{app:polyproof}. In the following discussions, we will refer to the constraints $0 \leq p_{ij} \leq 1$ for all $i,j \in [N]$ implied by \eqref{subeq:sumconstraint} and \eqref{subeq:nonnegativityconstraint} as the \emph{box constraints}, while \eqref{subeq:pmlconstraints} will be referred to as the \emph{PML constraints}. 
Since our utility functions are convex, they will be maximized at an extreme point of $\mathcal{M}_N(\epsilon)$. Inspired by \cite{extremalmechanismLong}, we will refer to the vertices of $\mathcal{M}_N(\epsilon)$ as \emph{extremal mechanisms}. Note that we may use standard methods from linear programming to enumerate all extreme points of $\mathcal{M}_N(\epsilon)$ by finding the basic feasible solutions of the given linear constraints (see \cite{Griva2009}). \modified{In methodology, this approach also shares similarity with the vertex enumeration approach presented in \cite{lopuhaa2020data} for LIP}.\footnote{\modified{We remark that LIP imposes both lower and upper bound on the information density values, while \eqref{eq:simplyfiedPML} only involves an upper bound. This additional lower bound in LIP results in a significantly different behavior with respect to mechanism design. See \cite{grosse2024quantifying} for a detailed discussion.}} While \modified{these vertex enumerations} can be directly implemented, \modified{their} computational complexity grows significantly for larger alphabet sizes. \modified{This issue is addressed in \cite{lopuhaa2020data} by presenting algorithmic data release protocols. In contrast, in this work} we set out to find closed-form optimal solutions under various assumptions \modified{in Section \ref{sec:optsol}}. 



\subsection{\modified{Relation to Randomized Response for LDP}}
 \label{sec:tradeoff:subsec:rr}
\modified{To contrast the extremal PML mechanisms against existing solutions for LDP, we present} an analysis of the popular \emph{randomized response mechanisms} in the PML framework. These mechanisms are in many scenarios optimal for the class of sub-convex utility functions \modified{subject to LDP constraints} \cite{extremalmechanismLong}. The results show that in the PML framework, mechanisms can exploit knowledge about the data's prior distribution in order to increase utility compared to the randomized response mechanism, even under \modified{similar} leakage requirements. 

\begin{definition}[\modified{Randomized response} {{\cite{extremalmechanismLong}}}]
\label{def:LDP-RR}
    Given a source alphabet $\mathcal{X}$ of size $|\mathcal{X}| = N$, the \emph{randomized response mechanism} \modified{with parameter $\epsilon_r \geq 0$} is given by
    \begin{equation}
    \label{eq:LDP-RR}
        P_{Y|X=x_i}(y_j) = \begin{cases}\frac{e^{\epsilon_r}}{(N-1)+e^{\epsilon_r}},& j=i \\
                              \frac{1}{(N-1)+e^{\epsilon_r}}, & j\neq i
                              \end{cases} \quad \forall i,j \in [N].
    \end{equation}
\end{definition}

 The \modified{PML} of a randomized response mechanism to $y_j \in \mathcal{Y}$ \modified{is calculated by substituting the maximum value of the conditional distribution \eqref{eq:LDP-RR} at $i=j$ into \eqref{eq:simplyfiedPML}.\footnote{\modified{The case $j=i$ will always maximize $P_{Y|X=x_i(y_j)}$, as $\epsilon_r \geq 0$.}} Note that, since the conditional distribution is given by a doubly-stochastic matrix, we also have $P_Y = P_X$. This yields} 
\begin{equation}
\label{eq:RRpml}
    \ell(X\to y_j) = \epsilon_r - \log \biggl( (\e^{\epsilon_r} - 1)P_X(x_j) + 1 \biggr), \quad \forall j \in [N].
\end{equation}

From \eqref{eq:RRpml}, it becomes clear that the \modified{PML of the mechanisms in~\eqref{eq:LDP-RR}} will be different from the privacy parameter \modified{$\epsilon_r$} corresponding to the mechanism's LDP guarantee. More specifically, \modified{analyzing the} randomized response mechanism \modified{in the PML framework} shows that\modified{, in this context, the \modified{largest} amount of information leaked to an adversary depends} on the \modified{minimum probability of the symbols in $X$. By maximizing \eqref{eq:RRpml} over all $y_j$, we obtain}
\begin{align}
\label{eq:epspml}
    \epsilon(P_X) &:= \underset{i\in [N]}{\max}\,\log\, \biggl(\frac{\e^{\epsilon_r}}{P_X(x_i)(\e^{\epsilon_r}-1) + 1}\biggr) \\ &= \log \, \biggl(\frac{\e^{\epsilon_r}}{P_X(x_N)(\e^{\epsilon_r}-1) + 1}\biggr).
\end{align}
The following proposition characterizes this leakage as a function of the prior distribution of $X$.
\begin{proposition}
\label{prop:epspmlschurconvex}
The mapping $P_X \mapsto \epsilon(P_X)$ is Schur-convex.
\end{proposition}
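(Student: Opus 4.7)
The plan is to exploit the fact that the leakage $\epsilon(P_X)$ given in \eqref{eq:epspml} depends on the prior $P_X$ only through its minimum entry $P_X(x_N) = \min_{i \in [N]} P_X(x_i)$. This reduces the Schur-convexity claim to a statement about composing a monotone scalar transformation with a Schur-concave function, which the paper has already noted for $\min(\cdot)$.

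Concretely, I would first define the scalar map
\begin{equation}
f(t) \coloneqq \log\left(\frac{e^{\epsilon_r}}{t(e^{\epsilon_r}-1)+1}\right) = \epsilon_r - \log\bigl(t(e^{\epsilon_r}-1)+1\bigr),
\end{equation}
for $t \in (0,1]$, so that $\epsilon(P_X) = f\bigl(\min_{i}P_X(x_i)\bigr)$. Since $\epsilon_r \geq 0$, the coefficient $e^{\epsilon_r}-1 \geq 0$, so $f$ is a non-increasing (indeed strictly decreasing when $\epsilon_r > 0$) function of $t$.

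Next, I would use the Schur-concavity of $\min(\cdot)$ noted right after Definition 3 of the paper: if $p \succ q$, then $\min_i p_i \leq \min_i q_i$. Applying the non-increasing map $f$ reverses the inequality, giving
\begin{equation}
\epsilon(p) = f\bigl(\min_i p_i\bigr) \;\geq\; f\bigl(\min_i q_i\bigr) = \epsilon(q),
\end{equation}
which is exactly the Schur-convexity of $P_X \mapsto \epsilon(P_X)$.

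I do not anticipate any real obstacle here; the only thing worth being careful about is the direction of monotonicity of $f$ (decreasing in $t$ reverses the $\leq$ from Schur-concavity of $\min$ into the $\geq$ required for Schur-convexity of $\epsilon$), and the trivial boundary case $\epsilon_r = 0$, in which case $\epsilon(P_X) \equiv 0$ and Schur-convexity holds vacuously. If a referee prefers a self-contained argument without invoking the Schur-concavity of $\min$, the same conclusion follows directly from the definition: taking $k=N-1$ in the majorization condition yields $\min_i p_i = 1-\sum_{i=1}^{N-1}p_{(i)} \leq 1-\sum_{i=1}^{N-1}q_{(i)} = \min_i q_i$.
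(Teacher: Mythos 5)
Your proof is correct and follows essentially the same route as the paper's: both write $\epsilon(P_X)$ as a composition of a decreasing scalar function with $\min(\cdot)$, invoke the Schur-concavity of $\min$, and conclude Schur-convexity of the composition (the paper cites a table in the majorization reference for this last step, while you spell out the reversal of inequalities directly, which is a minor cosmetic difference).
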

\begin{proof}
It is enough to note that \eqref{eq:epspml} can be seen as the composition $\epsilon = h(\phi(P_X(x)))$ by setting $h(t) := \log(\frac{\e^{\epsilon_r}}{t(\e^{\epsilon_r}-1)+1})$ and $\phi(P_X(x_1),\dots,P_X(x_N)) := \underset{i}{\min}\,P_X(x_i)$. Clearly, $h(t)$ is decreasing in $t \in \mathbb{R}$. Further, as shown in \cite{majorization}, $p \mapsto \min(p)$ is Schur-concave for $p \in \mathbb{R}^N$. By \cite[Table 1]{majorization}, compositions of this form are Schur-convex.
\end{proof}

This result provides insights into the behavior of \modified{the} randomized response mechanisms \modified{w.r.t.} \modified{PML}. Note that \modified{PML} is upper bounded by \modified{the LDP privacy parameter} \cite[Proposition 6]{pml}. When $P_X(x_j)$ tends to $1$ for some $j$, we get $\ell(X \rightarrow y_j) \rightarrow 0$, and $\ell(X \rightarrow y_i) \rightarrow \epsilon_r$ for all $i \neq j$. Due to \modified{Proposition \ref{prop:epspmlschurconvex}}, this can be seen as the worst-case prior distribution.

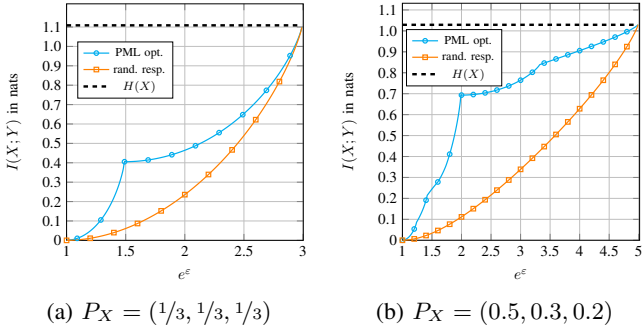
\begin{figure}[!t]
\centering
\begin{subfigure}[b]{.49\columnwidth}
    \centering
    \resizebox{\textwidth}{!}{
    \pgfplotstableread[col sep=comma,]{fig/utilcomparisonRR-PML_N-3_unif.csv}\datatable
\begin{tikzpicture}
\begin{axis}[scale=2,ymin=0.0,xmax=182, xlabel=$\e^{\epspml}$,ylabel= $I(X;Y)$ in nats,every axis x label/.style={at={(current axis.right of origin)},anchor=north},
               width=1*\textwidth, height=1*\textwidth, every axis plot post/.append style={domain=0:5,samples=50,smooth},
  enlargelimits=upper,
  xtick={0,50,100,150,200},
  xticklabels={1,1.5,2,2.5,3},
  xticklabel style={},
  ytick={0,0.1,0.2,0.3,0.4,0.5,0.6,0.7,0.8,0.9,1.0,1.1},
  yticklabels={0,0.1,0.2,0.3,0.4,0.5,0.6,0.7,0.8,0.9,1.0,1.1},
  x label style={at={(axis description cs:0.5,-0.1)},anchor=north},
  y label style={at={(axis description cs:0.15,.5)},rotate=90,anchor=south},
  ylabel style={rotate=-90,anchor=south,yshift=1.5em},
  legend style={at={(0.45,0.6)},anchor=south east,font=\footnotesize},
  mark size = 2.0pt, xmajorgrids,ymajorgrids]    
    \addplot [cyan,thick,mark=o,mark size=1.5pt, mark repeat=20, mark phase=10] table [x expr=\coordindex, y={PML}]{\datatable};
    \addlegendentry{PML opt.}
    
    \addplot[orange,thick,mark=square,mark size=1.5pt,mark repeat=20] table [x expr=\coordindex, y={RR}]{\datatable};
    \addlegendentry{rand. resp.}
    
    \addplot[black, dashed, ultra thick] table [x expr=\coordindex, y={H}]{\datatable};
    \addlegendentry{$H(X)$};

\end{axis}
\end{tikzpicture}
    }
    \caption{$P_X = (\nicefrac{1}{3},\nicefrac{1}{3},\nicefrac{1}{3})$}
    \label{fig:utilcompN3:subfig:unif}
\end{subfigure}
\begin{subfigure}[b]{.49\columnwidth}
    \centering
    \resizebox{\textwidth}{!}{
    \pgfplotstableread[col sep=comma,]{fig/utilcomparisonRR-PML_N-3_0p5_0p3_0p2.csv}\datatable
\begin{tikzpicture}
\begin{axis}[scale=2,ymin=0.0,xmax=365, xlabel=$\e^{\epspml}$,ylabel= $I(X;Y)$ in nats,every axis x label/.style={at={(current axis.right of origin)},anchor=north},
               width=1*\textwidth, height=1*\textwidth, every axis plot post/.append style={domain=0:5,samples=50,smooth},
  enlargelimits=upper,
  xtick={0,50,100,150,200,250,300,350,400},
  xticklabels={1,1.5,2,2.5,3,3.5,4,4.5,5},
  xticklabel style={},
  ytick={0,0.1,0.2,0.3,0.4,0.5,0.6,0.7,0.8,0.9,1.0,1.1},
  yticklabels={0,0.1,0.2,0.3,0.4,0.5,0.6,0.7,0.8,0.9,1.0,1.1},
  x label style={at={(axis description cs:0.5,-0.1)},anchor=north},
  y label style={at={(axis description cs:0.15,.5)},rotate=90,anchor=south},
  ylabel style={rotate=-90,anchor=south,yshift=1.5em},
  legend style={at={(0.42,0.65)},anchor=south east,font=\footnotesize},
  mark size = 2.0pt, xmajorgrids,ymajorgrids]    
    \addplot [cyan,thick,mark=o,mark size=1.5pt,mark repeat=20] table [x expr=\coordindex, y={PML}]{\datatable};
    \addlegendentry{PML opt.}
    
    \addplot[orange,thick,mark=square,mark size=1.5pt,mark repeat=20] table [x expr=\coordindex, y={RR}]{\datatable};
    \addlegendentry{rand. resp.};
    
    \addplot[black, dashed, ultra thick] table [x expr=\coordindex, y={H}]{\datatable};
    \addlegendentry{$H(X)$};

\end{axis}
\end{tikzpicture}
    }
    \caption{$P_X= (0.5,0.3,0.2)$}
    \label{fig:utilcompN3:subfig:nonunif}
\end{subfigure}
\caption{Comparison of mutual information utility between the PML-optimal mechanisms \modified{and the randomized response mechanisms in \eqref{eq:LDP-RR} for $N=3$}. The dashed line marks the maximum utility, i.e., the Shannon entropy of $X$.}
\label{fig:utilcompN3}
\end{figure}

In Figure \ref{fig:utilcompN3}, we \modified{plot} the \modified{mutual information achieved by} the PML optimized mechanisms \modified{compared to the randomized response mechanism. The optimal PML mechanism is computed via a straightforward vertex enumeration of $\mathcal M_N(\epsilon)$ and a subsequent exhaustive search for the vertex maximizing utility.} Given \modified{a value of $\epsilon < \epsilon_{\max}$}, \modified{in order to find the randomized response mechanism exactly achieving $\epsilon$-PML}, the value of $\epsilon_r$ is obtained \modified{from} \eqref{eq:epspml} as
 \begin{equation}
 \label{eq:epsldpofepspml}
     \epsilon_r(\epsilon) = \epsilon + \log \frac{1-p_{\min}}{1- p_{\min}\e^{\epsilon}}.
 \end{equation}
As expected, the plots show that the PML-optimized mechanisms are able to more efficiently exploit the privacy budget to achieve higher utility.
Figure \ref{fig:utilcompN3:subfig:unif} depicts the case of a uniform prior on $X$. Note that the characteristic point at $\epsilon = \log 1.5 = \log \frac{1}{P_X(x_1) + P_X(x_2)}$ marks the transition between the two \emph{privacy regions}, as described in Section~\ref{sec:prelim:subsec:regions}. Figure \ref{fig:utilcompN3:subfig:nonunif} considers \modified{a non-uniform prior} case \modified{with} $P_X = (0.5, 0.3, 0.2)$.

\section{Optimal Privacy Mechanisms}
\label{sec:optsol}
In this section, we derive optimal mechanisms for the privacy-utility tradeoff problem \eqref{eq:generalOPTproblem} with PML and sub-convex utility functions. We present various closed-form solutions \modified{under different assumptions on $\epsilon$ and $P_X$}. \modified{To begin with, we present the optimal binary randomization strategy for all $\epsilon$ and any arbitrary but fixed prior distribution of binary private data $X$. Then, we introduce the notion of privacy regions, defined based on disclosure limits of PML guarantees. We proceed to present an optimal mechanism for general alphabet sizes in the \emph{high-privacy} regime, the privacy region with the strictest disclosure prevention guarantee (i.e., when $\epsilon$ is sufficiently small). Then, we present optimal mechanisms for all privacy regions and a uniform prior distribution on $X$. We also present a linear program that is able to efficiently compute optimal mechanisms for all privacy levels and any arbitrary prior distributions on $X$. Finally, we provide illustrative numerical simulations demonstrating the presented mechanisms.}

\subsection{Optimal Binary Mechanism}
\label{sec:optsol:subsec:binary}
\modified{First, we focus on mechanisms for binary input alphabets.} Although this binary setup is, in terms of cardinality, a minimal example, it is particularly useful for two reasons. First, in many privatization scenarios, the sensitive features are indeed binary. Consider any survey asking participants about personal details with yes/no answers, e.g., sex (in the biological sense), the presence (or absence) of a specific disease, and so on. Second, the binary case admits a closed-form solution that is analytically tractable. Since the presented solution covers all combinations of privacy levels and prior distributions, it makes the behavior of PML-optimal mechanisms explainable and thereby provides insights about the privacy measure.

\begin{theorem}
\label{thrm:binarymechanism}
Suppose $X$ is distributed according to $P_X$ and let $\mathcal{X} \coloneqq \{x_1,x_2\}$. Given $0 \leq \epsilon \leq \epsilon_{\max}$, \modified{an instance of} the optimal privacy mechanisms \modified{$[\bm P^*]$} in problem \eqref{eq:optim_prob_cardinality} is
\begin{align}
\label{eq:binarymech}
    &P^*_{Y|X} = \\
    &\begin{dcases}\modified{\begin{bmatrix}
        e^{\epsilon}P_X(x_2) & 1-e^{\epsilon}P_X(x_2) \\
        1-e^{\epsilon}P_X(x_1) & e^{\epsilon}P_X(x_1)
    \end{bmatrix}} & \textnormal{if} \; P_X(x_1) < e^{-\epsilon}, \\
    \begin{bmatrix}
        \frac{e^{\epsilon}-1}{e^{\epsilon}P_X(x_1)} & \frac{1-e^{\epsilon}P_X(x_2)}{e^{\epsilon}P_X(x_1)} \\
        0 & 1
    \end{bmatrix} & \textnormal{if} \; P_X(x_1) \geq e^{-\epsilon}. \nonumber
    \end{dcases}
\end{align}
\end{theorem}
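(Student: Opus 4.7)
The plan is to combine the cardinality bound (Theorem \ref{thrm:cardinality}) with the polytope description (Lemma \ref{lemma:polytope}) to reduce \eqref{eq:optim_prob_cardinality} to maximizing a convex function over a two-dimensional compact polytope, enumerate the vertices of this polytope, and then compare their utilities using the sub-linearity of $\mu$.

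Since $\abs{\mathcal X}=2$, Theorem \ref{thrm:cardinality} restricts attention to $2\times 2$ row-stochastic matrices, and every sub-convex $U$ is convex, so the maximum over $\mathcal M_2(\epsilon)$ is attained at a vertex. I would parameterize a candidate by $a = p_{11}$ and $b = p_{21}$, so the PML constraints become four affine inequalities in the unit square $[0,1]^2$. The analysis then splits according to the sign of $1-e^{\epsilon}P_X(x_1)$. When $P_X(x_1) < e^{-\epsilon}$ all four coefficients $1-e^\epsilon P_X(x_i)$ are positive; simultaneously saturating the PML constraints associated with $p_{11}$ and $p_{22}$ gives the vertex $(a,b) = (e^\epsilon P_X(x_2),\, 1-e^\epsilon P_X(x_1))$, which is exactly the first case of \eqref{eq:binarymech}. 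When $P_X(x_1) \geq e^{-\epsilon}$ the coefficient $1-e^\epsilon P_X(x_1)$ is non-positive, which makes both PML constraints for column $y_1$ slack along the edge $p_{21}=0$; saturating $p_{21}=0$ together with the PML constraint on $p_{22}$ yields $(a,b) = \bigl(\tfrac{e^\epsilon-1}{e^\epsilon P_X(x_1)},\, 0\bigr)$, which is the second case of \eqref{eq:binarymech}. The two formulas coincide continuously at $P_X(x_1) = e^{-\epsilon}$.

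To finish, I would verify that the only other vertices of $\mathcal M_2(\epsilon)$ are the two trivial constant-output mechanisms and the column-swap of the claimed matrix. The column-swap lies in the same equivalence class as $\bm P^*$ and thus has the same utility by Remark \ref{remark:equivalenceclass}. For the trivial vertices, positive homogeneity of $\mu$ gives $\mu(\bm 0) = 0$, so each has utility $\mu(\bm 1)$. Since the two columns of $\bm P^*$ are non-negative and sum to $\bm 1$, sub-additivity of $\mu$ yields
\begin{equation}
    U(\bm P^*) = \mu(\bm P_1^*) + \mu(\bm P_2^*) \geq \mu(\bm P_1^* + \bm P_2^*) = \mu(\bm 1),
\end{equation}
so $\bm P^*$ weakly dominates the trivial vertices; combined with the equivalence-class symmetry this establishes optimality.

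The main obstacle will be the vertex bookkeeping: for each candidate pair of active inequalities one must check that the remaining six inequalities hold, and the set of PML constraints that can be tight in the interior of $[0,1]^2$ genuinely changes across the threshold $P_X(x_1) = e^{-\epsilon}$, which is precisely why the closed form splits into the two pieces of \eqref{eq:binarymech}.
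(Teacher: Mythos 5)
Your proof takes essentially the same route as the paper's (Appendix D): parameterize by $(p_{11},p_{21})$, enumerate the vertices of the two-dimensional polytope $\mathcal M_2(\epsilon)$, split on whether $e^{\epsilon}P_X(x_1)\geq 1$ (which determines whether the non-negativity constraint $p_{21}\geq 0$ becomes active in place of a PML constraint), and invoke the column-permutation equivalence class to conclude. The one place you genuinely improve on the write-up is the dismissal of the trivial vertices $(0,0)$ and $(1,1)$: the paper argues they yield zero \emph{mutual information}, which is specific to that choice of $\mu$ rather than to a general sub-convex $U$, whereas your sub-additivity bound $\mu(\bm P_1^*)+\mu(\bm P_2^*)\geq\mu(\bm P_1^*+\bm P_2^*)=\mu(\bm 1)$ shows the non-trivial vertices weakly dominate for \emph{every} sub-linear $\mu$, closing a small gap left implicit in the paper.
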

The proof of this theorem is provided in Appendix \ref{app:binaryproof}.

\subsubsection{Interpretation and Intuitive Insights}

The condition \modified{$P_X(x_1) \geq e^{-\epsilon}$} has a clear interpretation: It \modified{describes a scenario} in which \modified{the symbol $x_1$} has a relatively large probability and, as a result, the outcome \modified{$X = x_1$ can be deterministically disclosed}. Since in the PML framework, we assume that the adversary knows $P_X$, the probability of correctly guessing this symbol is already high \emph{a priori}\modified{, and hence the prior-to-posterior ratio in the PML formulation is small, even for a deterministic release of this symbol.} However, for the less probable outcome, a correct guess after observing $Y$ constitutes a large increase in leakage \modified{because} without observing $Y$, the adversary is unlikely to guess this outcome correctly. The corresponding channel therefore only \modified{masks} the less probable outcome of $Y$ in order to reach the desired privacy level. \modified{Any adversary will then be able to deterministically infer the realization of $X$ if she observes $Y=y_1$. On the other hand, the randomization in the channel permits such a deterministic inference given that $Y=y_2$.}

\subsection{PML Privacy Regions}
\label{sec:prelim:subsec:regions}
As discussed in Section~\ref{sec:prelim:pml}, the parameter $\epsilon$ in an $\epsilon$-PML guarantee is always finite and bounded by $\epsilon_{\mathrm{max}}$ and it is easy to see that the mechanism $\bm P = \bm I_N$ satisfies $\epsilon_{\mathrm{max}}$-PML. Hence, unlike \modified{, e.g.,} LDP, PML allows zero probability assignments in the mechanism \modified{matrix} $\bm P$. That is, depending on $\epsilon$, there may be an outcome $y$ with $P_Y(y) > 0$ such that $P_{Y \mid X=x}(y) =0$ for some $x \in \mathcal X$. Thus, in order to find closed forms for optimal mechanisms, first we need to understand how the value of $\epsilon$ determines the number of zeros in each column of matrix $\bm P$. 

\modified{Recall that we assume the symbols in $\mathcal X$ to be ordered by non-increasing probability.} Let $\epsilon_k(P_X) \coloneqq -\log \sum_{i=1}^{N-k}P_X(x_i)$ for $k \in \{0, \ldots, N-1\}$. We say that $\epsilon$ is in the \emph{$k^{\text{th}}$ privacy region} if $\epsilon \in [\epsilon_{k-1}(P_X), \epsilon_k(P_X))$, where $k \in [N-1]$. \modified{We have} the following lemma, which is a generalization of \cite[Proposition 4]{saeidian2023inferential}.

\begin{lemma}
\label{lem:privacyregiondisclosure}
Suppose $X$ is distributed according to $P_X$. 
If $\epsilon$ is in privacy region $k \in [N-1]$, then each column of a mechanism $\bm P \in \mathcal M_N(\epsilon)$ can contain at most $k-1$ zero entries. 
\end{lemma}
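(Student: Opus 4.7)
The plan is to translate the PML constraint acting on each column of $\bm P$ into a lower bound on the aggregate prior probability of that column's support, and then exploit the fact that $P_X$ is ordered non-increasingly to compare this bound against the thresholds $\epsilon_k(P_X)$. Fix an arbitrary column $j$, let $T_j \coloneqq \{i \in [N] : p_{ij} > 0\}$, and write $s \coloneqq N - |T_j|$ for its number of zero entries. Columns with $T_j = \emptyset$ contribute nothing to $P_Y$ and are irrelevant for PML, so I restrict attention to $T_j \neq \emptyset$.

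Pick any $i^* \in \argmax_{i \in T_j} p_{ij}$ and denote $p^* \coloneqq p_{i^* j} > 0$. Applying the PML constraint \eqref{subeq:pmlconstraints} at row $i^*$ together with the identity $P_Y(y_j) = \sum_{i \in T_j} P_X(x_i)\, p_{ij}$ yields
\[
p^* \;\leq\; e^{\epsilon} \sum_{i \in T_j} P_X(x_i)\, p_{ij} \;\leq\; e^{\epsilon}\, p^* \sum_{i \in T_j} P_X(x_i),
\]
and dividing by $p^* > 0$ gives the key inequality $\sum_{i \in T_j} P_X(x_i) \geq e^{-\epsilon}$. Although only a single row of the PML constraints is used, the resulting bound depends on $\bm P$ only through the column support $T_j$.

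Since $\mathcal{X}$ is ordered so that $P_X(x_1) \geq \cdots \geq P_X(x_N)$, the maximum of $\sum_{i \in T} P_X(x_i)$ over all subsets $T \subseteq [N]$ of size $N-s$ is attained at $T = \{1,\dots,N-s\}$. Therefore
\[
\sum_{i=1}^{N-s} P_X(x_i) \;\geq\; \sum_{i \in T_j} P_X(x_i) \;\geq\; e^{-\epsilon},
\]
which by the definition $\epsilon_s(P_X) = -\log \sum_{i=1}^{N-s} P_X(x_i)$ is precisely $\epsilon_s(P_X) \leq \epsilon$. If $\epsilon$ lies in region $k$, then $\epsilon < \epsilon_k(P_X)$, so $\epsilon_s(P_X) < \epsilon_k(P_X)$; monotonicity of $s \mapsto \epsilon_s(P_X)$ (immediate from the decrease in $s$ of the partial sums $\sum_{i=1}^{N-s} P_X(x_i)$) then forces $s < k$, i.e., $s \leq k - 1$. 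As $j$ was arbitrary, the bound applies to every (nontrivial) column.

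The main obstacle, such as it is, lies in choosing the right inequality to extract: processing the PML constraints row-by-row produces $N$ separate conditions that are awkward to combine, whereas passing to the column maximum $p^*$ makes the nonzero entries cancel on both sides and produces a single inequality involving only the support $T_j$ and the prior. Once that step is seen, the remainder is essentially a combinatorial argument using the ordering of $P_X$ together with the definition of the privacy regions.
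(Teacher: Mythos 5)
Your proof is correct and takes essentially the same approach as the paper: both establish the key inequality $\sum_{i \in T_j} P_X(x_i) \geq e^{-\epsilon}$ for the support $T_j$ of any nontrivial column and then exploit the non-increasing ordering of $P_X$ together with the monotonicity of $s \mapsto \epsilon_s(P_X)$ to force $s \leq k-1$. The only cosmetic difference is that the paper obtains this bound by writing $\ell(X\to y)$ as $-\log\sum_{i\in T_j} P_X(x_i)$ plus the leakage under a renormalized prior supported on $T_j$ and then invoking the non-negativity of PML, whereas you derive it directly from the elementary estimate $p_{ij}\leq p^*$ on the column entries — which is the same inequality in slightly different clothing.
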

The proof of this lemma is provided in Appendix \ref{app:proofLemLowPrivacy}. We emphasize that Lemma~\ref{lem:privacyregiondisclosure} does \emph{not} mean that in privacy region $k$ any arbitrary collection of $k-1$ elements in a column can be set to zero. Instead, Lemma~\ref{lem:privacyregiondisclosure} states that if a privacy mechanism includes a column with $k -1$ zero entries, then $\epsilon \geq \epsilon_{k-1}(P_X)$. \modified{The statement in Lemma \ref{lem:privacyregiondisclosure} therefore implies a definition of privacy regions that are ordered from strictest ($\epsilon \in [0,\epsilon_1)$) to least strict ($\epsilon \in [\epsilon_{N-1},\epsilon_{\max})$), where, conceptually, strictness is defined by the number of possible zero-assignments in the mechanism matrix.}


\subsection{Optimal Mechanism in the High-Privacy Regime}
\label{sec:optsol:subsec:highprivacy}

We will now consider a specific privacy region in greater detail, which we will call the \emph{high-privacy regime}.
The high-privacy regime for a given prior distribution $P_X$ is defined as the values of the privacy parameter $\epsilon$ that fall into the \emph{first} privacy region. By Lemma~\ref{lem:privacyregiondisclosure}, in this region, all entries in $\bm P$ are strictly positive. The following theorem presents the optimal mechanism in the high-privacy regime. Its proof can be found in Appendix \ref{app:highprivacyproof}.

\begin{theorem}
\label{thrm:highprivacymechanism}
Assume $X$ is distributed according to $P_X$. If $\epsilon \in [0, \epsilon_1(P_X))$, an instance of the optimal privacy mechanisms $[\bm P^*]$ in problem~\eqref{eq:optim_prob_cardinality} is 
\begin{equation}
\label{eq:optMIlowPML}
    P^*_{Y|X=x_i}(y_j) = \begin{cases}
        1 - e^{\epsilon}(1-P_X(x_i)) & \textnormal{if}\quad i=j, \\
        e^{\epsilon}P_X(x_j) & \textnormal{if}\quad i \neq j,
    \end{cases}
\end{equation}
where $i,j \in [N]$.     
\end{theorem}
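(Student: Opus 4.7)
The plan is to establish optimality of $\bm P^*$ via a post-processing characterization: I would show that every feasible mechanism $\bm P \in \mathcal M_N(\epsilon)$ can be written as $\bm P = \bm P^* \bm R$ for some row-stochastic $\bm R$, and then invoke the data processing inequality for sub-convex utility functions (as established in \cite{extremalmechanismLong}) to conclude $U(\bm P) \leq U(\bm P^*)$ for every feasible $\bm P$.

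First I would verify that $\bm P^*$ is itself feasible. Writing it compactly as $\bm P^* = (1-e^{\epsilon})\bm I_N + e^{\epsilon}\bm 1 \bm p_X^T$ with $\bm p_X \coloneqq (P_X(x_1),\dots,P_X(x_N))^T$, row-stochasticity is immediate, and non-negativity of the diagonal reduces to $e^{\epsilon}(1-P_X(x_i)) \leq 1$, whose binding case $i = N$ is strict under the hypothesis $\epsilon < \epsilon_1(P_X) = -\log(1-P_X(x_N))$. A direct computation gives $P_Y^*(y_j) = P_X(x_j)$, so the off-diagonal PML ratios equal exactly $e^{\epsilon}$ and the diagonal ratio is no larger; hence $\bm P^*$ satisfies $\epsilon$-PML. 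Next, assuming $\epsilon > 0$ (the edge case $\epsilon = 0$ forces $\bm P = \bm 1 \bm q^T$, and all sub-convex utilities then coincide by positive homogeneity of $\mu$), the matrix $\bm P^*$ is a rank-one perturbation of a scaled identity, and the Sherman--Morrison formula yields
\begin{equation*}
(\bm P^*)^{-1} = \frac{1}{1-e^{\epsilon}}\bigl(\bm I_N - e^{\epsilon} \bm 1 \bm p_X^T\bigr).
\end{equation*}
Setting $\bm R \coloneqq (\bm P^*)^{-1}\bm P$, the row sums of $\bm R$ equal one because $(\bm P^*)^{-1}\bm 1 = \bm 1$, and the entries simplify to
\begin{equation*}
R_{ij} = \frac{p_{ij} - e^{\epsilon} P_Y(y_j)}{1-e^{\epsilon}}.
\end{equation*}
The PML constraint \eqref{subeq:pmlconstraints} on $\bm P$ forces the numerator to be non-positive while the denominator $1-e^{\epsilon}$ is negative, so $R_{ij} \geq 0$, and $\bm R$ is a valid row-stochastic post-processing kernel. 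Applying the data processing inequality to the factorization $\bm P = \bm P^* \bm R$ then delivers $U(\bm P) \leq U(\bm P^*)$, as required.

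The main obstacle in executing this plan is spotting the correct factorization and recognising that the PML constraint is exactly what forces $R_{ij} \geq 0$: the non-positivity of the numerator and the negativity of the denominator $1 - e^{\epsilon}$ conspire to yield non-negativity, so the operational content of PML translates, almost tautologically, into feasibility of the post-processing $\bm R$. The restriction $\epsilon < \epsilon_1(P_X)$ is used only to guarantee that $\bm P^*$ itself lies in $\mathcal M_N(\epsilon)$ with strictly positive entries; outside the high-privacy regime this identity-plus-perturbation template violates non-negativity, and a different construction is needed, which is presumably the subject of the subsequent results in the paper.
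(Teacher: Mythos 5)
Your proof is correct, and it takes a genuinely different — and considerably more economical — route than the paper's. The paper's argument (Appendix E) is a vertex-enumeration proof: it establishes, via Lemmas 3–7, the structure of all extreme points of $\mathcal M_N(\epsilon)$, shows that the full-output-support vertex is unique up to column permutation, shows that lower-support vertices arise from column merges of that vertex, and only then invokes the data-processing inequality for sub-convex utilities to conclude that merging cannot help. Your proof collapses all of this into a single dominance argument: every feasible $\bm P$ — vertex or not, with or without zero columns — factors as $\bm P^*\bm R$, so the DPI applies immediately. The crucial and nontrivial insight is that $\bm P^*$ has the rank-one-plus-identity form that Sherman--Morrison inverts cleanly, and that the sign pattern in $R_{ij} = (p_{ij} - e^{\epsilon}P_Y(y_j))/(1-e^{\epsilon})$ is exactly what the PML constraint supplies (non-positive numerator) against the negative denominator $1-e^{\epsilon}$. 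Both uses of the high-privacy hypothesis are isolated correctly: $\epsilon < \epsilon_1(P_X)$ is needed only for feasibility of $\bm P^*$ (strict positivity of the diagonal entries), and $\epsilon > 0$ only for invertibility, with the $\epsilon = 0$ case handled separately and correctly (PML at $\epsilon=0$ forces constant columns, and homogeneity of $\mu$ then makes all feasible mechanisms tie). The computation $P^*_Y = P_X$ and the row-sum check $(\bm P^*)^{-1}\bm 1 = \bm 1$ are both right. What your approach buys is a much shorter and more conceptual proof that requires no explicit characterization of the polytope's extreme points; what it gives up is the structural information about \emph{all} extremal mechanisms, which the paper's longer argument also establishes and which is reused in the linear-programming development of Theorem~\ref{thrm:LP}.
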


\modified{
Note that for binary alphabets, this mechanism is a member of the equivalence class (see Remark \ref{remark:equivalenceclass}) of the binary mechanism~\eqref{eq:binarymech} in the case $P_X(x_1)\leq e^{-\epsilon}$. To illustrate the result of Theorem \ref{thrm:highprivacymechanism} for $N>2$, consider the following example. 
\begin{example}
    \label{ex:highprivacymech}
    Assume $P_X = (\nicefrac{2}{5},\nicefrac{1}{5},\nicefrac{1}{5},\nicefrac{1}{5})$ and $\epsilon = \log \nicefrac{9}{8}$. Then the optimal mechanism in Theorem \ref{thrm:highprivacymechanism} is
    \begin{equation}
       \bm P^* =  \begin{bmatrix}
            0.325 & 0.225 & 0.225 & 0.225 \\
            0.45 & 0.1 & 0.225 & 0.225 \\
            0.45 & 0.225 & 0.1 & 0.225 \\
            0.45 & 0.225 & 0.225 & 0.1
        \end{bmatrix}.
    \end{equation}
\end{example}
}

\subsection{Optimal Mechanisms for Uniform Priors}
\label{sec:optsol:subsec:generalunif}

Next, we present a closed-form solution for optimal mechanisms in all privacy regions and a uniform distribution of the private data. The mechanisms are optimal for all \emph{permutation-symmteric} sub-convex functions. Note that many instances of sub-convex functions used in reality \modified{satisfy} this condition (mutual information, TV-distance, \dots). 
As discussed in Section \ref{sec:prelim:subsec:regions}, the privacy region of $\epsilon$ determines the maximum number of zero-valued elements in any column. For the case of uniform priors, it is irrelevant \emph{which} of the elements are set to zero. As a result, the optimization becomes independent of the realized symbol $y$, and columns of the optimal mechanism are permutations of one another, arranged such that a mechanism \modified{satisfies} the row-stochasticity constraint.

\begin{theorem}
\label{thrm:allregionseqprioropt}
Assume $\epsilon$ in some arbitrary privacy region $k \in [N-1]$. Suppose the private data $X$ is uniformly distributed. Then, assuming that $\mu$ is a \modified{permutation-}symmetric function,\footnote{\modified{A function $\mu: \mathbb R_+^N \to \mathbb R_+$ is said to be permutation-symmetric iff $\mu(\bm z) = \mu(\bm z \bm \Pi)$ for all $\bm z \in \mathbb R_+^N$, where $\bm \Pi$ is any arbitrary permutation matrix.}} \modified{an} optimal privacy mechanism in problem \eqref{eq:optim_prob_cardinality} is \modified{part of the equivalence class $[P^*_{Y|X}]$ with}
\begin{align}
\label{eq:allregionseqprioropt}
    &P_{Y|X=x_i}^*(y_j) =\\
    &\begin{cases}
    \frac{e^{\epsilon}}{N}, & \text{ if } i \in \{j+1,\dots,\text{mod}(j+(N-k),N)\}, \\
    1-e^{\epsilon}\frac{(N-k)}{N}, & \text{ if } \modified{i = j}, \\
    0, & \text{ o/w},
    \end{cases}\nonumber
\end{align}
where $i,j \in [N]$. That is, each column of the optimal mechanism has exactly $N-(k-1)$ non-zero elements, of which $N-k$ take the value $\frac{e^{\epsilon}}{N}$ and one has value $1-e^{\epsilon}\frac{(N-k)}{N}$.
\end{theorem}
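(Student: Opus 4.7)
The plan is to exploit both the simplification afforded by the uniform prior and the permutation-symmetry of $\mu$. By Theorem~\ref{thrm:cardinality}, we may restrict attention to $N \times N$ mechanisms in $\mathcal M_N(\epsilon)$. For $P_X$ uniform we have $P_Y(y_j) = \tfrac{1}{N}\sum_{\ell=1}^N p_{\ell j}$, so writing $\beta_j \coloneqq \sum_{\ell=1}^N p_{\ell j}$ for the $j^\text{th}$ column sum, the PML inequality $p_{ij} \leq e^\epsilon P_Y(y_j)$ reduces to the per-column box constraint $0 \leq p_{ij} \leq \tfrac{e^\epsilon}{N}\beta_j$; row-stochasticity contributes $\sum_j \beta_j = N$.

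The second step is to decompose the utility using sub-linearity. For each $j$ with $\beta_j > 0$, sub-linearity of $\mu$ gives $\mu(\bm P_j) = \beta_j\,\mu(\bm P_j/\beta_j)$, while $\mu(\bm 0) = 0$ handles $\beta_j = 0$. The normalized column $\bm v_j \coloneqq \bm P_j/\beta_j$ then lies in the convex polytope
\begin{equation}
\mathcal V \coloneqq \Bigl\{\bm v \in \mathbb R^N \colon \textstyle\sum_{i=1}^N v_i = 1,\; 0 \leq v_i \leq \tfrac{e^\epsilon}{N} \Bigr\}.
\end{equation}
Setting $\mu^* \coloneqq \max_{\bm v \in \mathcal V}\mu(\bm v)$, this yields the uniform upper bound $U(\bm P) = \sum_j \beta_j\,\mu(\bm v_j) \leq \mu^*\sum_j \beta_j = N\mu^*$ valid for every feasible $\bm P$.

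The core of the argument is the characterization of $\mu^*$ via the vertices of $\mathcal V$. Since sub-linearity implies convexity of $\mu$ and $\mathcal V$ is compact, $\mu^*$ is attained at a vertex of $\mathcal V$. A vertex saturates $N-1$ of the box constraints, so it has some number $a$ of coordinates equal to $\tfrac{e^\epsilon}{N}$, one ``free'' coordinate of value $1 - a\tfrac{e^\epsilon}{N}$, and the remaining $N-1-a$ coordinates equal to zero. For $\epsilon$ in privacy region $k$ we have $\tfrac{N}{e^\epsilon} \in (N-k,\,N-k+1]$, so the feasibility condition $0 \leq 1-a\tfrac{e^\epsilon}{N} \leq \tfrac{e^\epsilon}{N}$ forces $a = N-k$, with the degenerate boundary case $\epsilon = \epsilon_{k-1}(P_X)$ collapsing the free coordinate onto the value $\tfrac{e^\epsilon}{N}$. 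Every vertex therefore has the same multiset of entries, namely $N-k$ copies of $\tfrac{e^\epsilon}{N}$, one entry equal to $1-\tfrac{(N-k)e^\epsilon}{N}$, and $k-1$ zeros; by permutation-symmetry of $\mu$ they all yield the identical value $\mu^*$.

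Finally, the mechanism $\bm P^*$ in~\eqref{eq:allregionseqprioropt} is engineered so that every column is a permutation of such a vertex and every column sum satisfies $\beta_j^* = 1$; its cyclic layout of nonzero entries guarantees row-stochasticity, and a direct substitution confirms that it satisfies $\epsilon$-PML. Consequently $U(\bm P^*) = N\mu^*$, matching the upper bound and establishing optimality. I expect the main subtlety to be the vertex enumeration of $\mathcal V$, in particular verifying that no value $a \neq N-k$ is admissible in the interior of the privacy region and that the formula~\eqref{eq:allregionseqprioropt} correctly captures the degenerate case at $\epsilon = \epsilon_{k-1}(P_X)$; the remainder is routine bookkeeping using sub-linearity and the standard fact that a convex function on a compact polytope attains its maximum at an extreme point.
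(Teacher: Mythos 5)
Your proof is correct and reaches the same conclusion by a closely related but distinct argument. Both you and the paper begin by using the degree-one homogeneity of $\mu$ to factor out a scalar from each column — you pull out the column sum $\beta_j$, while the paper pulls out $P_Y(y_j) = \beta_j/N$, so your $\bm v_j \in \mathcal V$ is exactly the paper's lift vector $\bm \lambda_j$ scaled by $1/N$ — and both then reduce the problem to maximizing $\mu$ over the resulting per-column polytope. The paper closes this step via Schur-convexity: it exhibits a vector $\bm \lambda^*$ that \emph{majorizes} every feasible lift vector and invokes the fact that a convex permutation-symmetric function is Schur-convex. You instead appeal to the more elementary principle that a convex function on a compact polytope attains its maximum at a vertex, enumerate the vertices explicitly (namely $a$ coordinates at $e^\epsilon/N$, one free coordinate at $1 - a e^\epsilon/N$, the rest zero, with feasibility pinning down $a = N-k$), and then use permutation symmetry to conclude all vertices give the same $\mu^*$. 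Your route has the advantage of being self-contained (no appeal to Schur-convexity or the existence of a global majorizing element on the polytope) and handles the boundary $\epsilon = \epsilon_{k-1}(P_X)$ transparently, while the paper's route is more compact once the majorization machinery is already in play — as it is elsewhere in the paper, e.g.\ Proposition~\ref{prop:epspmlschurconvex}. One small point worth making explicit in your write-up: the bound $U(\bm P) = \sum_j \beta_j \mu(\bm v_j) \le \mu^* \sum_j \beta_j = N\mu^*$ uses $\sum_j \beta_j = N$ from row-stochasticity, and the achievability step relies on $\beta^*_j = 1$ for every column of~\eqref{eq:allregionseqprioropt}, which you should verify by a row-sum count of the cyclic layout; you assert it but a one-line check (each row intersects exactly $N-k$ of the cyclic intervals plus one diagonal entry) would tighten the argument.
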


The proof of this theorem is provided in Appendix \ref{app:allregionseqprioropt}. \modified{To illustrate the structure of the mechanisms in Theorem \ref{thrm:allregionseqprioropt}, we give the following example.

\begin{example}
    Assume $P_X = (\nicefrac{1}{4},\nicefrac{1}{4},\nicefrac{1}{4},\nicefrac{1}{4})$ and $\epsilon = \log 3$. Hence $\epsilon \in [\epsilon_2(P_X),\epsilon_3(P_X))$, that is, $\epsilon$ is in the $3^{\text{rd}}$ privacy region. An optimal mechanism according to Theorem \ref{thrm:allregionseqprioropt} is
    \begin{equation}
       \bm P^* = \begin{bmatrix}
            0.75 & 0.25 & 0 & 0 \\ 0 & 0.75 & 0.25 & 0 \\
            0 & 0 & 0.75 & 0.25 \\ 0.25 & 0 & 0 & 0.75 
        \end{bmatrix}.
    \end{equation}
\end{example}
}
The following result characterizes the mutual information when privatizing data using the presented mechanism with uniform priors.

\begin{corollary}
    Assume the private data $X$ to have alphabet size $N$ and be distributed according to a uniform prior distribution. Then the mutual information utility the mechanism in \eqref{eq:allregionseqprioropt} achieves is 
    \begin{equation}
    \label{eq:MIreductionUnif}
        I(X;Y) = \log N - H\biggl((\underbrace{\frac{e^{\epsilon}}{N},\dots,\frac{e^{\epsilon}}{N}}_{(N-k) \text{-times}},1-\frac{(N-k)e^{\epsilon}}{N})\biggr),
    \end{equation}
    where $H$ is the entropy function of appropriate dimension.
\end{corollary}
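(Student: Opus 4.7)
The plan is to compute $I(X;Y) = H(Y) - H(Y \mid X)$ directly using the explicit structure of the mechanism in \eqref{eq:allregionseqprioropt} together with the uniformity of $P_X$.

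First, I would establish that the output distribution $P_Y$ is also uniform. Observe that by inspection of the construction in \eqref{eq:allregionseqprioropt}, each column of $\bm P^*$ contains exactly $N-k$ entries equal to $e^{\epsilon}/N$ and a single entry equal to $1 - (N-k)e^{\epsilon}/N$, with the remaining $k-1$ entries being zero. Hence every column sums to
\begin{equation}
(N-k)\frac{e^{\epsilon}}{N} + \left(1 - (N-k)\frac{e^{\epsilon}}{N}\right) = 1.
\end{equation}
Since $P_X(x_i) = 1/N$ for all $i$, this yields
\begin{equation}
P_Y(y_j) = \sum_{i=1}^N P_X(x_i)\,P^*_{Y \mid X=x_i}(y_j) = \frac{1}{N}\sum_{i=1}^N P^*_{Y \mid X=x_i}(y_j) = \frac{1}{N},
\end{equation}
so $H(Y) = \log N$.

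Next, I would compute $H(Y \mid X)$. By the construction in \eqref{eq:allregionseqprioropt}, each row $i$ of $\bm P^*$ likewise contains $N-k$ entries equal to $e^{\epsilon}/N$ and one entry equal to $1 - (N-k)e^{\epsilon}/N$ (with the remaining entries zero), so for every $i \in [N]$
\begin{equation}
H(Y \mid X=x_i) = H\!\left((\tfrac{e^{\epsilon}}{N},\ldots,\tfrac{e^{\epsilon}}{N},1 - (N-k)\tfrac{e^{\epsilon}}{N})\right),
\end{equation}
where there are $N-k$ copies of $e^{\epsilon}/N$. Averaging over the uniform $P_X$ leaves the same expression for $H(Y \mid X)$. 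Combining the two pieces yields \eqref{eq:MIreductionUnif}.

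There is no real obstacle here: the claim reduces to the two elementary observations that both rows and columns of the constructed mechanism have the same multiset of entries (a consequence of the cyclic shift pattern in \eqref{eq:allregionseqprioropt} together with the uniformity of $P_X$), so that $Y$ inherits the uniform distribution of $X$ and the conditional entropy does not depend on the conditioning symbol.
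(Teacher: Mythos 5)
Your proof is correct, and it rests on exactly the two facts a direct derivation must use: that the cyclic-shift construction of \eqref{eq:allregionseqprioropt} is doubly stochastic (so $P_Y$ inherits the uniform distribution of $P_X$, giving $H(Y)=\log N$), and that every row is a permutation of the same $N$-tuple (so $H(Y\mid X=x_i)$ is independent of $i$). Your $I(X;Y)=H(Y)-H(Y\mid X)$ decomposition is a clean way to package these observations, and it matches the essence of the paper's (short, direct) computation.
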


Note that with the mutual information utility and uniform priors, an $\epsilon_{\max}$-PML private mechanism, that is, a mechanism with $X = Y$ or no privatization, achieves $I(X;Y) = \log N$. Therefore, the entropy term in \eqref{eq:MIreductionUnif} can be seen as the privatization cost of the PML-optimal mechanism for some $\epsilon < \epsilon_{\max}$. For \modified{the smallest parameter} $\epsilon = 0$, this privatization cost attains the value $\log N$, which yields $I(X;Y) = 0$.  

\subsection{General Optimal Mechanisms via a Linear Program}
\label{sec:optsol:subsec:generalLP}

In this section, we present a \modified{reduced-complexity} linear program that \modified{improves upon the general vertex enumeration approach in Section \ref{sec:tradeoff:subsec:optregion} for computing} optimal mechanisms under the most general assumptions, that is, all privacy regions and arbitrary prior distributions. The presented linear program operates on a collection of extremal \emph{lift vectors}, the size and elements of which depend on the privacy parameter $\epsilon$, as well as the prior distribution $P_X$. \modified{For any fixed $y$, we define a lift vector $\bm \lambda(y)$ to be the posterior distribution vector of $X$ given $Y=y$, normalized by each corresponding prior probability of the symbols in $\mathcal X$, that is, 
\begin{equation}
    \bm \lambda(y) = \biggl(\frac{P_{X|Y=y}(x_1)}{P_X(x_1)},\dots,\frac{P_{X|Y=y}(x_N)}{P_X(x_N)} \biggr)^T.
\end{equation}
We then use the fact that, due to the homogeneity of sub-linear functions, for any fixed $y$ we have
\begin{equation}
    \mu(\bm P_y) = \mu(\bm \lambda(y) P_Y(y)) = P_Y(y) \mu(\bm \lambda(y)),
\end{equation}
and the constraint imposed on $\bm \lambda$ by the given PML requirement, i.e., $\max_x (P_{X|Y=y}(x)/P_X(x)) \leq e^{\epsilon}$, is independent of $y$. This allows us to split the optimization of $P_{Y|X}$ into separatly optimizing $P_{X|Y}$, and finding the optimal output distribution $P_Y$.}
Assume a fixed prior distribution $P_X$ and $\epsilon$ in the $k^{\text{th}}$ privacy region. Consider the polytope of feasible lift vectors \modified{as}
\begin{align}
\begin{split}
\label{eq:liftVertices}
\modified{\bm V(k,P_X) \coloneqq }\biggl\{\bm \lambda \in &[0,e^{\epsilon}]^N : \sum_{i = 1}^N \lambda_i P_X(x_i)=1, \\ &\text{ and } \sum_{i = 1}^N \mathbb I (\lambda_i > 0) P_X(x_i)\geq e^{-\epsilon}\biggr\},
\end{split}
\end{align}
where $\mathbb I(\cdot)$ denotes the indicator function. We denote the set of all vertices of this polytope by \modified{$\bm V^*(k,P_X)\coloneqq \{\bm \lambda^*_j\}$}. Due to the maximization of a convex function on a polytope, it can be shown that the columns of the optimal \modified{posterior distribution matrix} take values in the set \modified{$\bm V^*(k,P_X)$}. The linear program \modified{in Theorem \ref{thrm:LP} then} determines \emph{which} of \modified{these} vertices compose an optimal solution, and computes $P_Y(y_j)$ for $j \in [N]$ as the \say{weights} assigned to each of the selected vertices in the corresponding optimal solution.

Algorithm~\ref{alg:vertexenumeration} describes \modified{a} procedure for finding the elements of \modified{$\bm V^* (k,P_X)$. First, we initialize $\bm V^*$ to be the empty set (Line 1). Then, Line 2-9 consist of $k$ iterations of all privacy regions up to $k$. For each region, we iterate over all possible positions to have $N-l+1$ non-zero elements (Line 3).\footnote{\modified{We use $\binom{N}{N-l+1}$ to denote the set of all possible combinations of $N-l+1$ elements out of the set $[N]$.} For more details on algorithms generating combinations, see, e.g., \cite{payne1979combination}.} In Line 4 we check that the condition for a feasible point in \eqref{eq:liftVertices} is satisfied. If this is the case, we iterate over all of the indices in the set $J$ (Line 6). In each iteration, the current $j$ acts as the non-extremal element of the lift vector whose value is calculated in Line 8. All other indices in $J$ get the extremal value $e^\epsilon$ (Line 9). Each such configuration is then added to the set $\bm V^*$ if $\lambda_j \geq 0$. Then, having constructed the set $\bm V^*$, we obtain the following linear program for computing a $\bm P^* \in [P_{Y|X}^*]$.}

\begin{theorem}
\label{thrm:LP}
 Suppose $X$ is distributed according to $P_X$, and assume $\epsilon$ is in privacy region $k \in [N-1]$.  Then, assuming that $\mu$ is \modified{permutation-symmetric}, the optimal privacy mechanism in problem~\eqref{eq:optim_prob_cardinality} can be found by the linear program 
\begin{align}
\begin{split}
    \max_{P_Y(y_j), \, j \in [N]}\, &\sum_{j= 1}^{|\bm V^*(k,P_X)|}P_Y(y_j) \mu(\bm \lambda^*_j) \\
    \text{s.t.} \quad &\sum_{j= 1}^{|\bm V^*(k,P_X)|} P_Y(y_j) = 1 \\
         \quad &\sum_{j= 1}^{|\bm V^*(k,P_X)|}P_Y(y_j) \lambda^*_{ij} = 1 \quad \forall i \in [N], 
\end{split}         
\end{align}
where $\lambda^*_{ij}$ \modified{denotes the $i^{\text{th}}$ element of the vertex} $\bm \lambda^*_j$.
\end{theorem}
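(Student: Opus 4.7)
My plan is to reparametrize the optimization in terms of \emph{lift vectors} and the output distribution, so that the sub-linear utility decouples column by column, and then to reduce the problem to a linear program over the vertices of the feasibility set by using convexity of $\mu$. Concretely, for any feasible mechanism I would write each column as $\bm P_j = P_Y(y_j)\,\bm \lambda(y_j)$, where $\lambda_i(y_j) \coloneqq P_{X \mid Y=y_j}(x_i)/P_X(x_i)$ whenever $P_Y(y_j) > 0$. Homogeneity of $\mu$ then gives $\mu(\bm P_j) = P_Y(y_j)\,\mu(\bm \lambda(y_j))$, so $U(\bm P) = \sum_j P_Y(y_j)\,\mu(\bm \lambda(y_j))$. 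I would then translate the constraints: the PML condition \eqref{eq:simplyfiedPML} becomes $\max_i \lambda_i(y_j) \leq e^{\epsilon}$; $P_{X \mid Y=y_j}$ being a valid distribution gives $\sum_i P_X(x_i)\lambda_i(y_j) = 1$; the support bound implied by Lemma~\ref{lem:privacyregiondisclosure} reads $\sum_i \mathbb{I}(\lambda_i(y_j) > 0) P_X(x_i) \geq e^{-\epsilon}$, so that $\bm \lambda(y_j) \in \bm V(k, P_X)$; and row-stochasticity of $\bm P$ gives $\sum_j P_Y(y_j)\lambda_i(y_j) = 1$ for every $i \in [N]$.

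Next, I would note that $\bm V(k, P_X)$ is not itself convex: the indicator-function condition makes it the \emph{union} of convex sub-polytopes $\{\bm V_J\}$, indexed by admissible supports $J \subseteq [N]$ satisfying $\sum_{i \in J}P_X(x_i) \geq e^{-\epsilon}$. Within each $\bm V_J$, the vertices are exactly the lift vectors enumerated by Algorithm~\ref{alg:vertexenumeration}: all but one coordinate in $J$ is pinned to $e^{\epsilon}$, and the remaining one is forced by the normalization constraint. Their union over all admissible $J$ yields $\bm V^*(k, P_X)$.

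The key step is to argue that each lift vector can be replaced by a convex combination of vertices of its own $\bm V_{J_j}$ without loss of utility. Writing $\bm \lambda(y_j) = \sum_m \alpha_m^j \bm \lambda^*_m$ with $\bm \lambda^*_m \in \bm V_{J_j}$, and defining new weights $P'_Y(y_m) \coloneqq \sum_j P_Y(y_j)\alpha_m^j$ indexed by vertices, a direct computation verifies that both $\sum_m P'_Y(y_m) = 1$ and $\sum_m P'_Y(y_m)\lambda^*_{im} = 1$ for every $i$ remain satisfied. Convexity of $\mu$ (from sub-linearity) then yields $\mu(\bm \lambda(y_j)) \leq \sum_m \alpha_m^j \mu(\bm \lambda^*_m)$, so the vertex-only mechanism achieves utility at least as large as the original. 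This reduces \eqref{eq:optim_prob_cardinality} to the stated linear program over the $|\bm V^*(k, P_X)|$ vertex weights, where the objective and constraints are exactly those claimed.

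The chief obstacle I anticipate is handling the non-convexity of $\bm V(k, P_X)$ introduced by the indicator-function condition: the convex decomposition of each $\bm \lambda(y_j)$ must be carried out \emph{within} a single support-polytope $\bm V_{J_j}$, since otherwise the sparsity pattern would not be preserved and the intermediate vectors may violate PML-admissibility. The permutation-symmetry assumption on $\mu$ plays a secondary but convenient role, ensuring that vertices related by coordinate permutations contribute equally to the objective, which is what allows the LP to treat distinct vertex classes on an equal footing without spurious relabeling issues.
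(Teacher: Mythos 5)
Your proof is correct and in fact takes a slightly more direct route than the paper's. The paper first fixes the (unknown) optimal output distribution $\bm\rho^*$ and then argues that, for that fixed $\bm\rho^*$, the maximization of the convex objective over the coupled $\bm\Lambda$-polytope is attained at a vertex; it then characterizes those vertices and closes the loop by observing that once $\bm\Lambda^*$ is restricted to vertices, optimizing $\bm\rho$ is a linear program. Your argument avoids this two-stage, slightly circular structure: you take an \emph{arbitrary} feasible mechanism, perform a Carath\'eodory decomposition of each column's lift vector into vertices, and explicitly construct the reweighted output distribution $P'_Y(y_m)=\sum_j P_Y(y_j)\alpha^j_m$, verifying that both the normalization and the marginal constraints $\sum_m P'_Y(y_m)\lambda^*_{im}=1$ are preserved. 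Convexity of $\mu$ then gives a mechanism supported only on vertices with utility at least as large; combined with the fact that every feasible LP point induces a valid $\epsilon$-PML mechanism (with utility equal to the LP objective, by homogeneity of $\mu$), this shows the LP value equals $U^*(\epsilon)$. This is a clean and self-contained reduction. One remark: the obstacle you flag as the chief difficulty, namely that $\bm V(k,P_X)$ is non-convex because of the indicator condition, is not actually an obstacle. Given $\bm\lambda\in[0,e^{\epsilon}]^N$ with $\sum_i\lambda_i P_X(x_i)=1$, one automatically has
\begin{equation}
1=\sum_{i:\lambda_i>0}\lambda_i P_X(x_i)\leq e^{\epsilon}\sum_{i:\lambda_i>0}P_X(x_i),
\end{equation}
so the indicator constraint is implied and $\bm V(k,P_X)$ is the convex polytope $\{\bm\lambda\in[0,e^{\epsilon}]^N:\sum_i\lambda_i P_X(x_i)=1\}$. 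Your decomposition into support-faces $\bm V_J$ is still a valid and tidy way to organize the Carath\'eodory step (it keeps the decomposing vertices within the same support as the original lift vector), but it is not needed to restore convexity; a direct Carath\'eodory decomposition within $\bm V(k,P_X)$ would also land on vertices in $\bm V^*(k,P_X)$, since by Lemma~\ref{lem:privacyregiondisclosure} every vertex already has at most $k-1$ zeros when $\epsilon$ is in region $k$.
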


\modified{The proof of Theorem \ref{thrm:LP} is given in Appendix \ref{app:LP}. We remark that the linear program in Theorem \ref{thrm:LP} has a significantly lower complexity than the vertex enumeration approach in Section \ref{sec:tradeoff:subsec:optregion}. Especially in scenarios in which $k \ll N$, the cardinality of the set $\bm V^*(k,P_X)$ stays comparatively small (see Remark \ref{remark:LPsize}). Further, the separation of $P_{Y|X}$ into the $y$-independent lift vectors $\bm \lambda \in \bm V^*(k,P_X)$ and \say{weights} $P_Y(y)$ significantly reduces the dimensionality of the linear program.}

\begin{remark}
\label{remark:LPsize}
    Evidently, the size of the set \modified{$\bm V^*(k,P_X)$}, and therefore the computational complexity of the linear program in Theorem \ref{thrm:LP}, grows with increasing value of $k = 1,\dots,N-1$. More precisely, we can upper bound the number of extremal lift vectors in the set for a fixed value of $k$ as 
    \begin{equation}
    \label{eq:LPorderupperbound}
        |\bm V^*(k,P_X)| \leq \sum_{l=1}^k (N-l+1)\binom{N}{l-1},
    \end{equation}
    and \eqref{eq:LPorderupperbound} holds with equality if $P_X$ is \modified{the} uniform distribution.
\end{remark}

\modified{
\begin{algorithm2e}[t]
\setstretch{1.125}
\DontPrintSemicolon
\caption{Vertex enumeration of $\bm V (k,P_X)$}\label{alg:vertexenumeration}
\SetKwInOut{Input}{input}\SetKwInOut{Output}{output}
\Input{Privacy parameter $\epsilon$ in privacy region $k$, prior distribution $P_X$, input alphabet size $N$.}
\Output{$\bm V^* (k,P_X)$}
\nl Initialize $\bm V^* \gets \emptyset$; \\
\nl \For{$l \in [k]$}
    {
    \nl \For{each set of $N-l+1$ indices $J \in \binom{N}{N-l+1}$}  {
          \nl \If{$e^{-\epsilon} \leq \sum_{m \in J}P_X(x_m)$}
          {
            \nl Initialize $\bm \lambda = {0}^N$\;
            \nl \For{each index $j \in J$}
            {    
            \nl \If{$e^{\epsilon}\sum_{m \in J\setminus\{j\}}P_X(x_m) \leq 1$}
            {
                \nl Set $\lambda_j = \frac{1-e^{\epsilon}\sum_{m \in J\setminus\{j\}}P_X(x_m)}{P_X(x_j)}$\;
                \nl $\forall i \in J\setminus\{j\}$ set $\lambda_i = e^{\epsilon}$\;
                \nl Add $\bm \lambda$ to $\bm V^*$\;
                }
            }
            }
           }
    }
    \nl \Return{$\bm V^*$}\;
\end{algorithm2e}
}
To illustrate the structure of the set $\bm V^*(k,P_X)$, consider the following example.

\begin{example}
    Assume some arbitrary ternary random variable $X$ distributed according to $P_X$. Assume further that $\epsilon$ is in the second privacy region, that is, the privacy budget allows for one zero in the feasible lift vectors. Then, the set $\bm V^*(2,P_X)$ has the structure
    \begin{equation}
    \left\{ \begin{pmatrix}
        e^{\epsilon} \\
        0 \\
        r_3(1)
    \end{pmatrix}, \begin{pmatrix}
         r_1(2)  \\
         e^{\epsilon} \\
         \modified{0}
    \end{pmatrix}, \begin{pmatrix}
        e^{\epsilon} \\
        r_2(1) \\
        0
    \end{pmatrix}, \dots, \begin{pmatrix}
        e^{\epsilon} \\
        e^{\epsilon} \\
        r_3(3)
    \end{pmatrix} , 
    \right\}
    \end{equation}
    where $r_{i}(j) \coloneqq \frac{1-e^{\epsilon}P_X(x_j)}{P_X(x_i)}$ \modified{and $r_3(3) \coloneqq \frac{1-e^{\epsilon}(1-P_X(x_3))}{P_X(x_3)}$}.
\end{example}

\modified{\subsection{Numerical Results}}
\label{sec:optsol:numerical}
\modified{In this section, we illustrate some of the derived mechanisms using numerical examples. We use synthetic data with a specified prior to model $n$ instances of a secret $X$, labeled as the sequence $X^n \coloneqq (X_1,\dots,X_n)$. For each $\epsilon \in [0,\epsilon_{s}]$ spaced equidistant with $\Delta \epsilon$, we then privatize each of the symbols $X_i$, creating the privatized sequence $Y^n = (Y_1,\dots,Y_n)$ using an instance of the presented mechanisms. Here, the pair $(\epsilon_s,\Delta \epsilon)$ is chosen as either $(\epsilon_{\max},0.005)$ (binary mechanism) or $(\epsilon_1(P_X),0.0005)$ (high-privacy mechanism). For each fixed $\epsilon$, we calculate the corresponding privacy parameter $\epsilon_r$ for the randomized response mechanism from~\eqref{eq:epsldpofepspml} such that both mechanisms achieve the same PML. We repeat each experiment 10 times with a sample size $n=1000$.

\subsubsection{Mutual information}
We evaluate mechanism performance using the empirical estimate of the mutual information
\begin{equation}
   \hat{I}(X^n;Y^n) = \sum_{i=1}^N \sum_{j=1}^N \frac{f(x_i,y_j)}{N}\log\frac{Nf(x_i,y_j)}{f(x_i)f(y_j)},
\end{equation}
where $f(x_i,y_j)$ denotes the frequency of the tuple $(x_i,y_j)$ in the sequence $(X^n,Y^n)\coloneqq \{(X_1,Y_1),\dots,(X_n,Y_n)\}$, and $f(x_i)$ and $f(y_j)$ denote the frequencies of symbols $x_i$ and $y_j$ in the sequences $X^n$ and $Y^n$, respectively.   
In Figure \ref{fig:empiricalutilitybin}, we compare the empirical mutual information of the optimal binary mechanism presented in Section \ref{sec:optsol:subsec:binary}, as well as the optimal high-privacy mechanism in Section \ref{sec:optsol:subsec:highprivacy} to the corresponding randomized response mechanism. There is a clear increase in utility for the mechanisms optimal for PML.\footnote{\modified{Note that for a binary uniform prior distribution, the two mechanisms are identical. In fact, in this specific case, PML and LDP are equivalent privacy measures, see \cite[Example 1]{grosse2024quantifying}.}}

\subsubsection{Pearson correlation coefficient} 
In Figure \ref{fig:corrcoef}, we further demonstrate mechanism performance under the empirical Pearson correlation coefficient $r(X^n;Y^n)$ defined as
\begin{equation}
r(X^n;Y^n) = \frac{\sum_{i=1}^{n} (X_i - \bar{X})(Y_i - \bar{Y})}{\sqrt{\sum_{i=1}^{n} (X_i - \bar{X})^2 \sum_{i=1}^{n} (Y_i - \bar{Y})^2}},
\end{equation}
with $\Bar{X}$, $\Bar{Y}$ denoting the sample mean of $X^n$ and $Y^n$, respectively and each $X_i, Y_i \in [N]$. We remark that the Pearson correlation coefficient is not a sub-convex utility function. Hence, the results in Figure \ref{fig:corrcoef} demonstrate that the presented mechanisms are able to increase utility compared to randomized response even for utility functions that do not satisfy sub-convexity.
}
\begin{figure}[!t]
    \centering
    \begin{subfigure}[b]{.99\columnwidth}
    \centering
            \includegraphics[scale=0.5]{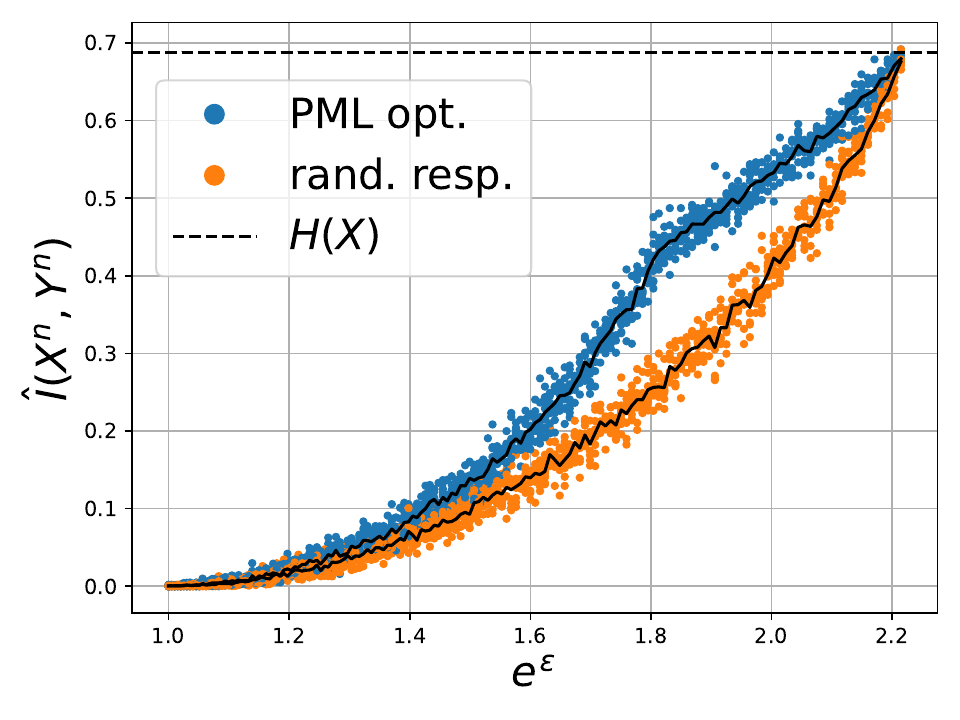}
            \caption{Binary mechanism, $P_X = (0.55,0.45)$}
    \end{subfigure}
    \par\bigskip
    \begin{subfigure}[b]{.99\columnwidth}
        \centering
            \includegraphics[scale=0.5]{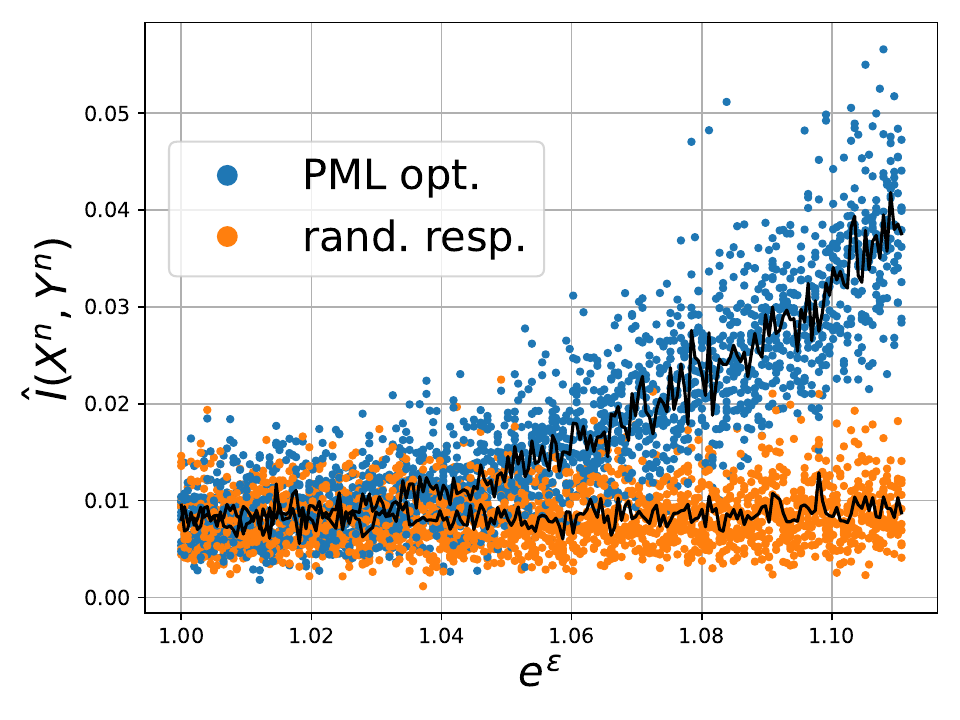}
            \caption{High-privacy mechanism, $P_X = (0.3,0.2,0.2,0.2,0.1)$}
    \end{subfigure}
    \caption{\modified{Empirical mutual information of the optimal binary mechanism~\eqref{eq:binarymech} and the optimal high-privacy mechanism~\eqref{eq:RRpml} for non-uniform prior distributions, and how they compare to the randomized response mechanism in~\eqref{eq:LDP-RR}. The black line indicates the mean value of the $10$ experiments.}}
    \label{fig:empiricalutilitybin}
\end{figure}

\begin{figure}[!t]
    \centering
    \begin{subfigure}[b]{.99\columnwidth}
    \centering
        \includegraphics[scale=0.5]{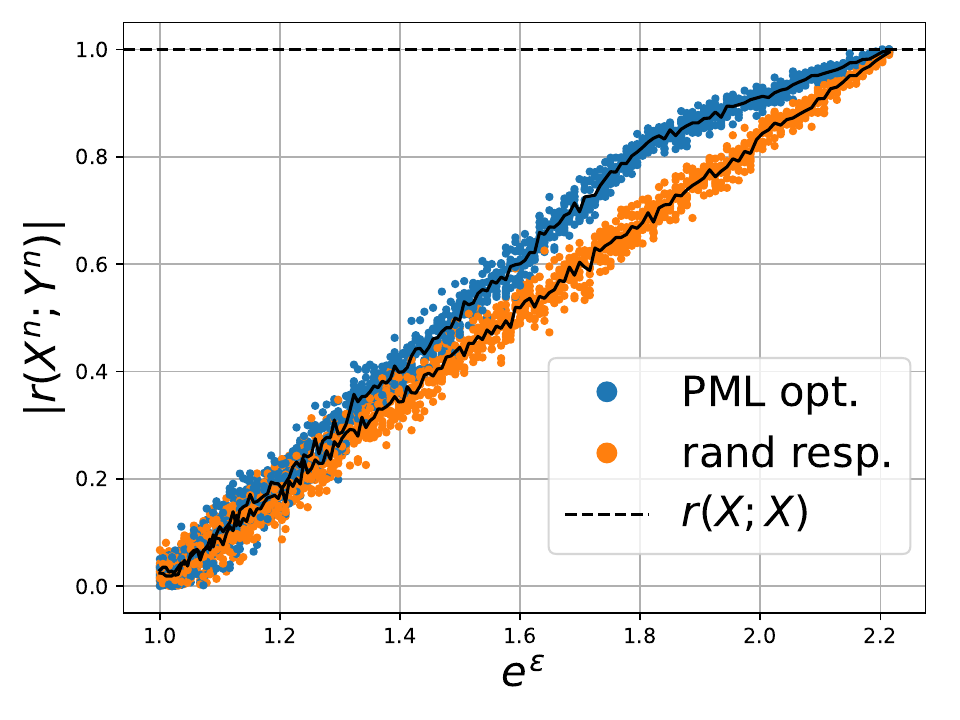}
        \caption{Binary mechanism, $P_X = (0.55,0.45)$}
        \label{fig:corrcoef:subfig:N2}
    \end{subfigure}
    \par\bigskip
    \begin{subfigure}[b]{.99\columnwidth}
    \centering
        \includegraphics[scale=0.5]{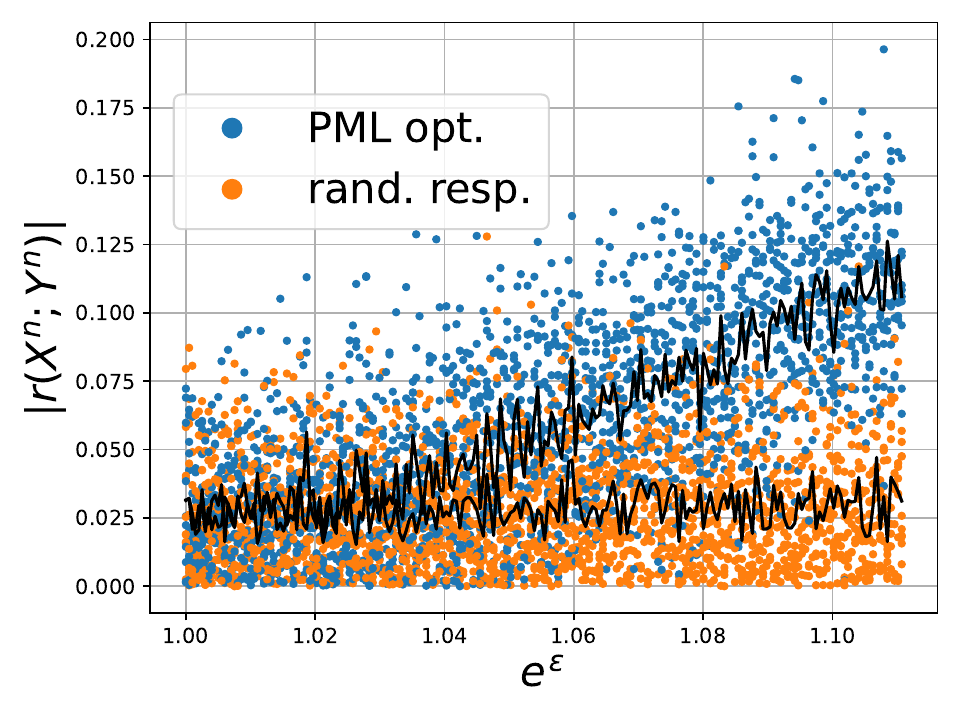}
        \caption{High-privacy mechanism, $P_X = (0.3,0.2,0.2,0.2,0.1)$}
        \label{fig:corrcoef:subfig:N5}
    \end{subfigure}
    \caption{\modified{Emprical correlation coefficient of the optimal binary and optimal high-privacy mechanism compared to the randomized response mechanism in \eqref{eq:LDP-RR}. The black line indicates the mean value of the $10$ experiments.}}
    \label{fig:corrcoef}
\end{figure}

\section{Conclusions}
\label{sec:conclusions}
In this paper, we have used the PML framework to \modified{analyze} a general privacy-utility tradeoff problem that is central to the privacy-by-design approach as it is mandated by, e.g., \modified{the European General Data Protection Regulation (}GDPR\modified{) \cite{EuropeanParliament2016a}}. We presented various closed-form optimal solutions to the mechanism design problem with PML and sub-convex utility. We \modified{also} showed that computing optimal mechanisms even in the most general case is tractable and can be done via a linear program. Further, we demonstrated how the inferential results of PML regarding disclosure can be used to directly guide mechanism design by defining a set of privacy regions dependent on the privacy budget that directly affect the structure of the mechanism matrix. The results constitute an important first step in implementing PML privacy, and are essential for designing more complex data processing systems with strict PML guarantees at a minimal loss of performance.


%
\appendix
\subsection{Proof of Theorem \ref{thrm:cardinality}}
\label{app:proofOfCardinality}
We follow the \emph{perturbation method}~\modified{\cite{gohari2012evaluation}}, \cite[Appendix C]{el2011networkinformationtheory}. Fix some $M > N$ and consider the maximization problem
\begin{equation*}
    \max_{P_{Y|X} \in \left\{\bm P \in  \mathcal S_{N,M} \colon \epsilon_m(\bm P) \leq \epsilon \right\},} \, U(P_{Y|X}).
\end{equation*}
Let $P^*_{Y \mid X}$ and $\bm P^*$ both denote the mechanism that achieves the maximum. Let $P^*_{XY} = P^*_{Y \mid X} \times P_X$ denote the optimal joint distribution. Given a constant $\gamma \in \mathbb R$ and a mapping $\phi : \mathcal Y \to \mathbb R$, let  $P_{XY}^{\gamma}$ be the perturbed version of $P^*_{XY}$ defined as $P_{XY}^{\gamma}(x,y) \coloneqq P^*_{XY}(x,y)(1+ \gamma \phi(y))$ for all $(x,y) \in \mathcal X \times \mathcal Y$. We assume that $1+ \gamma \phi(y) \geq 0$ for all $y \in \mathcal{Y}$ and $\mathbb{E}_{Y \sim P^*_{Y\mid X=x}}[\phi(Y)] = 0$ for all $x \in \mathcal{X}$. Then, the marginal distribution $P_X^\gamma$ is 
\begin{equation}
\label{eq:boundingcond1}
\begin{aligned}
    P_X^{\gamma}(x) &= \sum_{y \in \mathcal{Y}}P^*_{XY}(x,y)(1+ \gamma \phi(y)) \\&= \sum_{y \in \mathcal{Y}} P^*_{XY}(x,y) = P_{X}(x),
\end{aligned}
\end{equation}
for all $x \in \mathcal{X}$, implying that the distribution of $X$ is unaffected by the perturbation. Thus, we may write $P_{XY}^\gamma = P^\gamma_{Y \mid X} \times P_X$. Also, note that since the constraints $\mathbb{E}_{Y \sim P^*_{Y \mid X=x}}[\phi(Y)] = 0$ with $x \in \mathcal X$ specify at most $N$ linearly independent equations, a non-zero $\phi$ exists as long as $M > N$.

Now, by definition, mechanism $P^*_{Y \mid X}$ satisfies $\epsilon$-PML. We argue that the
perturbed mechanism $P^\gamma_{Y \mid X}$ also satisfies $\epsilon$-PML. To see why, note that $P^{\gamma}_{XY}(x,y) = P_{X}(x)P^*_{Y|X=x}(y)(1+\gamma \phi(y))$, which yields
\begin{equation}
    P^*_{Y|X=x}(y) = \frac{P^{\gamma}_{XY}(x,y)}{P_X(x)(1+\phi(y))},
\end{equation}
for all $x \in \mathcal{X}$ and $y\in\mathcal{Y}$. Therefore, we have
\begin{align*}
\ell_{P^*_{XY}}(X \to y) &= \max_x \frac{P^*_{Y|X=x}(y)}{\sum_x P^*_{Y|X=x}(y) \cdot P_X(x)}\\
&= \left(\underset{x}{\max}\,\frac{P^{\gamma}_{XY}(x,y)}{P_{X}(x)} \right) \cdot \frac{1}{\sum_x P^{\gamma}_{XY}(x,y)} \\[0.5em]
&= \frac{\underset{x}{\max}\,P^\gamma_{Y|X=x}(y)}{P^\gamma_{Y}(y)} \leq e^{\epsilon},
\end{align*}
for all $y \in \mathcal Y$, where $P^\gamma_Y$ denotes the marginal of $P^\gamma_{XY}$ over $Y$. That is, the perturbed distribution also yields a valid $\epsilon$-PML mechanism.

Next, we examine the utility of the perturbed mechanism:
\begin{equation}
\begin{aligned}
    U(P^\gamma_{Y|X}) &=\sum_{y \in \mathcal{Y}} \mu\biggl(\bm P^*_y (1+\gamma \phi(y))\biggr)\\
    &= \sum_{y \in \mathcal{Y}}(1+\gamma \phi(y)) \; \mu\biggl(\bm P^*_y \biggr)\\
    &= \sum_{y \in \mathcal{Y}} \; \mu\biggl(\bm P^*_y \biggr) + \gamma \sum_{y \in \mathcal{Y}}\phi(y) \; \mu \biggl(\bm P^*_y\biggr) \\
    &= U(P^*_{Y|X}) + \gamma \, U_{\phi}(P^*_{Y|X}),
\end{aligned}
\end{equation}
where $\bm P^*_y$ denotes the column in matrix $\bm P^*$ corresponding to outcome $y$, and $U_{\phi}(P^*_{Y|X}) \coloneqq \sum_{y \in \mathcal{Y}}\phi(y) \; \mu\biggl(\bm P^*_y \biggr)$. 
Since $P^*_{Y \mid X}$ is the optimal mechanism it must hold that
\begin{equation}
\label{eq:partialepsequalszero}
    \frac{\partial}{\partial \gamma}U(P^\gamma_{Y|X}) = U_{\phi}(P^*_{Y|X}) = 0,
\end{equation} 
and therefore, $U(P^\gamma_{Y|X}) = U(P^*_{Y|X})$. That is, the perturbed mechanism $P^\gamma_{Y|X}$ achieves the same utility as the optimal mechanism $P^*_{Y|X}$. 

Finally, we choose $\gamma$ to be the largest value such that $1 +  \gamma \phi(y)\geq 0$ holds for all $y \in \mathcal Y$. At this value of $\gamma$, there exists $y^* \in \mathcal Y$ satisfying $1 + \gamma \phi(y^*) = 0$, and consequently, $P^{\gamma}_Y(y^*) = 0$. This implies that the size of the support set of $Y$ can be reduced to $M-1$ while maintaining the $\epsilon$-PML privacy guarantee and without any loss to the utility. Furthermore, the above argument can be repeated as long as $M > N$. We conclude that, without loss of generality, we can restrict the feasible set of privacy mechanisms in problem~\eqref{eq:generalOPTproblem} to those with $N = M$, as desired.  \qed

\subsection{Proof of Lemma~\ref{lemma:polytope}}
\label{app:polyproof}
Fix $j \in [N]$. 
The privacy constraint for outcome $y_j$ can be expressed as 
\begin{equation}
    \exp\biggl(\ell(X \rightarrow y_j)\biggr) = \frac{\underset{i}{\max}\,\{p_{ij}\}}{\sum_i p_{ij}P_X(x_i)} \leq \e^{\epsilon}.
\end{equation}
Denoting the prior by $\boldsymbol{\pi} \coloneqq (P_X(x_1),\dots,P_X(x_N))^T \in [0,1]^N$, the above constraint can be rewritten as $(\boldsymbol{\pi}^T \bm P_j) e^{\epsilon} \geq p_{ij}$ for all $i \in [N]$. That is, for each $i \in [N]$ the constraint describes a closed half-space
\begin{equation}
    \{\bm P_j \geq 0: (\boldsymbol{\pi}^T \bm P_j) e^{\epsilon} - p_{ij} \geq 0\} 
    .
\end{equation}
The set $\mathcal{M}_N(\epsilon)$ consists of the intersection of these closed half-spaces for all $i \in [N]$, as well as the (closed) polytope imposed by the box constraints $0 \leq p_{ij} \leq 1$ (a unit $N^2$-hypercube) and the requirement on row-stochasticity, which defines a hyperplane. The intersection of finitely many closed half-spaces and polytopes is again a closed polytope \cite{boyd2004convex}. \qed

\subsection{Proof of Lemma \ref{lem:privacyregiondisclosure}}
\label{app:proofLemLowPrivacy}
Suppose $P_{Y \mid X}$ has \modified{a column with} $2 \leq l \leq N$ non-zero elements. We show that $\max_{y \in \mathcal{Y}}\ell(X\to y) \geq \epsilon_{N-l}(P_X) = -\log \sum_{i=1}^{l} P_X(x_i)$. By assumption, there exists some $y \in \mathcal{Y}$ and some $x_{i_1},\dots, x_{i_l} \in \mathcal {X}$ such that $P_{Y|X=x}(y) = 0$ for all $x \in \mathcal{X}\backslash\{x_{i_1},\dots,x_{i_l}\}$. Define the distribution $Q_X$ as $Q_X(x) = \frac{P_X(x)}{\sum_{j=1}^l P_X(x_{i_j})}$ for $x \in \{x_{i_1},\dots,x_{i_l}\}$ and assign $Q_X(x)=0$ for all other elements of $\mathcal{X}$. Note that $Q_X$ is a probability distribution on $\mathcal{X}$ with the support set $\{ x_{i_1},\dots,x_{i_l} \}$. Looking at the leakage of the mechanism, we have
\begin{equation}
\begin{aligned}
    &\ell_{P_{Y|X}\times P_X}(X\rightarrow y) = \log \frac{\max_{x\in \mathcal{X}}P_{Y|X=x}(y)}{P_Y(y)} \\
    &= \log \frac{\max_{x \in \{x_{i_1},\dots,x_{i_l}\}}P_{Y|X=x}(y)}{\sum_{x \in \{x_{i_1},\dots,x_{i_l}\}}P_{Y|X=x}(y)P_X(x)} \\
    &= \log \frac{\max_{x \in \{x_{i_1},\dots,x_{i_k}\}}P_{Y|X=x}(y)}{\Big(\sum_{j=1}^l P_X(x_{i_j}) \Big) \sum_{x \in \{x_{i_1},\dots,x_{i_k}\}}P_{Y|X=x}(y)Q_X(x)} \\
    &= \log \frac{1}{\sum_{j=1}^l P_X(x_{i_j})} + \ell_{P_{Y|X}\times Q_X}(X \to y) \\
    &\geq \log \frac{1}{\sum_{j=1}^l P_X(x_{i_j})} \geq \log \frac{1}{\sum_{i=1}^l P_X(x_{l})} = \epsilon_{N-l},
\end{aligned}
\end{equation}
where the first inequality is due to the non-negativity of PML \cite[Lemma 1]{pml}. \qed

\subsection{Proof of Theorem~\ref{thrm:binarymechanism}}
\label{app:binaryproof}
\modified{We prove this theorem by straightforwardly calculating the vertices of the polytope defined in Lemma \ref{lemma:polytope}. In the present binary case, this is equivalent to finding the intersecting point of the lines defined by the PML constraints. The solution approach is illustrated in Figure \ref{fig:BIBO-mi}.}
Let $\pi_1 \coloneqq P_X(x_1)$ and $\pi_2 \coloneqq P_X(x_2)$. Recall that $p_{ij} = P_{Y|X=x_i}(y_j)$. In the binary case, due to the row-stochasticity relation $p_{i2} = 1 - p_{i1}$ for $i=1,2$, a mechanism $\bm P$ is fully determined by its first column $\bm P_1 = (p_{11}, p_{21})^T$. Further, by the assumption \modified{that the prior probabilities are in non-increasing order}, we \modified{have} $\pi_1 \geq \pi_2$.

Now, note that $p_{11},p_{21} \in [0,1]$, and we can split the region $[0,1]^2$ into the disjoint sets $S_1 \coloneqq \{(p_{11},p_{21}) \in [0,1]^2: p_{11} \geq p_{21}\}$ and $S_2 \coloneqq [0,1]^2 \setminus S_1$. Considering the PML constraints on the first column $\bm P_1$ of a generic binary mechanism, we get the following boundaries on the optimization region.
\begin{equation}
\label{eq:boundary11}
    B_1^{y_1} \coloneqq \{(p_{11},p_{21}) \in S_1: p_{21} = \biggl(\frac{1-\pi_1e^{\epsilon}}{\pi_2 e^{\epsilon}}\biggr) p_{11}\}
\end{equation}
and
\begin{equation}
\label{eq:boundary21}
     B_2^{y_1} \coloneqq \{(p_{11},p_{21}) \in S_2: p_{11} = \biggl(\frac{1-\pi_2e^{\epsilon}}{\pi_1 e^{\epsilon}}\biggr) p_{21}\}.
\end{equation}
Similarly, for the second column of the mechanism, that is, for $Y=y_2$, \modified{by using $p_{i2} = 1-p_{i1}$} we get
\begin{equation}
\label{eq:boundary12}
     B_1^{y_2} \coloneqq \{(p_{11},p_{21}) \in S_1: p_{21} = \biggl(\frac{\pi_1e^{\epsilon}}{1- \pi_2 e^{\epsilon}}\biggr) p_{11} + \frac{1-e^{\epsilon}}{1-\pi_2e^{\epsilon}}\}
\end{equation}
and
\begin{equation}
\label{eq:boundary22}
     B_2^{y_2} \coloneqq \{(p_{11},p_{21}) \in S_2: p_{11} = \biggl(\frac{\pi_2e^{\epsilon}}{1- \pi_1 e^{\epsilon}}\biggr) p_{21} + \frac{1-e^{\epsilon}}{1-\pi_1e^{\epsilon}}\}.
\end{equation}
The above sets describe the boundaries of $\mathcal M_2(\epsilon)$. Hence, their intersections yield \modified{the desired} extremal mechanisms. 

Fig. \ref{fig:BIBO-mi} depicts $\mathcal M_2(\epsilon)$ for different priors. The contour lines in the figure illustrate the value of mutual information between $X$ and $Y$. For $p_{11}=p_{21}$ mutual information attains its minimum value of zero. Therefore, the points $(p_{11},p_{21}) = (0,0)$ and $(p_{11},p_{21}) = (1,1)$, which yield two of the four extremal mechanisms in the binary case, can be disregarded. As they result in zero utility\modified{, they can be seen as trivial solutions}. The search for a maximizing vertex can therefore be limited to the two intersections of the linear constraints that lie strictly inside of $S_1$ and $S_2$. Let the optimal non-trivial solution in each region $S_i$ be denoted as $\bm P_1^{(*,i)} \coloneqq (p^*_{11},p^*_{21})^T$, $i = 1,2$. 
Observe that $\pi_1 e^{\epsilon} \geq 1$ enforces the constraint $0 \leq p_{21} \leq 1$ to be active, therefore implying $p_{21}^{*} = 0$ for the boundary $B_1$ and $p_{21}^{*} = 1$ for $B_2$. 

Solving equations \eqref{eq:boundary11} - \eqref{eq:boundary22} when $\pi_1 e^\epsilon \leq 1$ and \modified{consequently} $\pi_2 e^\epsilon \leq 1$ yields
\begin{equation}
    \bm P_1^{(*,2)} = \begin{bmatrix}\,\pi_2 e^{\epsilon} \\ 1-\pi_1 e^{\epsilon}\,\end{bmatrix}
\end{equation}
\begin{equation}
    \bm P_1^{(*,1)} = \begin{bmatrix}1-\pi_2 e^{\epsilon} \\ \pi_1 e^{\epsilon}\end{bmatrix} = 1 - \bm P_1^{(*,2)}.
\end{equation}

\modified{On the other hand,} for the case \modified{$\pi_1e^{\epsilon} \geq 1$, that is whenever }$p_{21} \in \{0,1\}$, we have
\begin{equation}
    \bm P_1^{(*,1)} = \begin{bmatrix}
        \frac{e^{\epsilon}-1}{\pi_1 e^{\epsilon}}, & 0\end{bmatrix}^T, \quad
    \bm P_1^{(*,2)} = \begin{bmatrix}\frac{1-\pi_2 e^{\epsilon}}{\pi_1 e^{\epsilon}}, & 1\end{bmatrix}^T,
\end{equation}


Further, due to the invariance of the utility value to column permutations of the mechanism, the two mechanisms associated with the extreme points $\bm P_1^{(*,1)}$ and $\bm P_1^{(*,2)}$ are members of the same equivalence class $[\bm P^*]$ according to Remark \ref{remark:equivalenceclass}. Therefore, both $\bm P_1^{(*,1)}$ and $\bm P_1^{(*,2)}$ achieve the optimal utility value $U^*(\epsilon)$. \modified{Noticing that these solutions result in the mechanism matrix~\eqref{eq:binarymech} proves the theorem.} \qed

\begin{figure}[!t]
\centering
\begin{subfigure}[b]{.99\columnwidth}
    \centering
    \includegraphics[scale=0.8]{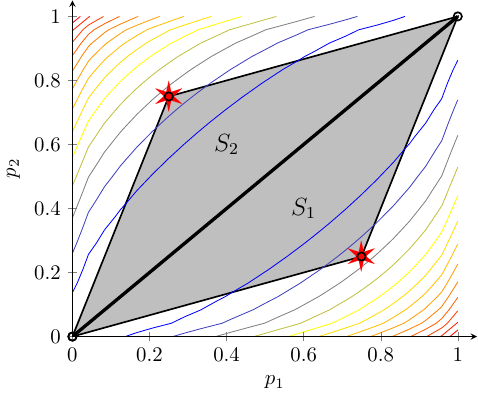}
    \caption{$P_X(x_1) = P_X(x_2)$}
    \label{fig:BIBO-mi:subfig:sym}
\end{subfigure}
\par\bigskip
\begin{subfigure}[b]{.99\columnwidth}
    \centering
    \includegraphics[scale=0.8]{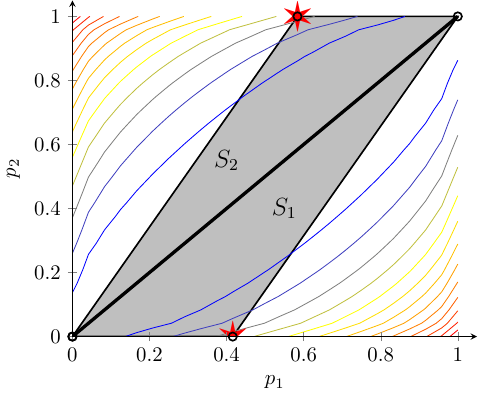}
    \caption{$P_X(x_1) = 0.8$.}
    \label{fig:BIBO-mi:subfig:asym}
\end{subfigure}
\caption[Geometry of the binary optimization problem for different prior distributions]{Geometry of the binary input binary output optimization problem for different prior configuration and $\epsilon = \log 1.5$. Contour lines show mutual information between source alphabet and the variable induced by the mechanism. Stars mark optimal solutions. Note that \ref{fig:BIBO-mi:subfig:asym} corresponds to a case where $e^{\epsilon} P_X(x_1) \geq 1$, that is, a case with an active box constraint. \ref{fig:BIBO-mi:subfig:sym} corresponds to a case with no active box constraints. The dividing line indicates the regions $S_1$,$S_2$.}
\label{fig:BIBO-mi}
\end{figure}

\subsection{Proof of Theorem \ref{thrm:highprivacymechanism}}
\label{app:highprivacyproof}
As discussed in Section \ref{sec:optsol:subsec:cardinality}, we will consider the mechanism $\bm P = P_{Y|X}$ to be a $N \times N$ matrix. We say that a mechanism $\bm Q$ has a smaller output \modified{support} size than $\bm P$ if $\bm Q$ has more all-zero columns than $\bm P$, implying $\supp(Q_{Y|X}) < \supp(P_{Y|X})$.

For simplicity, each step in the proof is given as a separate lemma. We start by establishing the overall structure of the extremal mechanisms. Since extremal mechanisms are the extreme points of the polytope $\mathcal M_N(\epsilon)$, they satisfy $N^2$ of the constraints in \eqref{eq:constraints} with equality. Note that all privacy mechanisms (extremal or not) satisfy the $N$ equality constraints of \eqref{subeq:sumconstraint}. Therefore, the distinguishing factor for extremal mechanisms is that they also satisfy $N(N-1)$ of the inequality constraints in \eqref{subeq:pmlconstraints} or \eqref{subeq:nonnegativityconstraint} with equality. This fact is frequently used in the proof.

Throughout this section, we will refer to \emph{extremal elements} as the elements of a mechanism that meet a PML constraint with equality. \modified{That is,} we call an element $p_{ij}$ of a mechanism $P_{Y|X}$ extremal if $\frac{p_{ij}}{P_Y(y_j)} = e^{\epsilon}$, where $i,j \in [N]$.    
\modified{We start by giving an overview over the proof's ideas. 

As mentioned above, in order to qualify for a maximizing solution, a mechanism needs to be a vertex of the optimization region polytope. It is easy to check that the mechanism in Theorem \ref{thrm:highprivacymechanism} meets this requirement, as it satisfies $N(N-1)$ inequality constraints with equality. What remains to show is that all \emph{other} mechanisms that meet this requirement achieve equal or lower utility.

To prove this, we first establish the structure of the vertices with output support size $N$ in Lemma \ref{lem:optimalstructure} and of mechanisms with output support size $M<N$ in Lemma \ref{lem:loweralphabetoptimalstructure}. Then, in Lemma \ref{lem:mergeisoptimal}, we show that merging two columns of an extremal mechanism results in another extremal mechanism with its output support size decreased by one. In Lemma \ref{lem:mergeexactlyonecolumn}, we use this fact to show that any extremal mechanism with output support size $N-k$ can be obtained form another extremal mechanism with output support size $N-k+1$ by such a merging operation. Finally, Lemma \ref{lem:nooptimalzerocolumns} uses the data processing inequality for sub-convex function to conclude that any extremal mechanism with lower output support size cannot achieve increased utility, as it can be obtained by successively merging the columns of an extremal mechanism with output support size $N$, an operation which cannot increase utility.}

\begin{lemma}
\label{lem:optimalstructure}
    If $P_{Y|X} \in \mathcal M_N(\epsilon)$ is extremal \modified{and has full output support}, then each $y_j$ fulfills exactly $N-1$ PML constraints with equality and the index $i \in [N]$ of the non-extremal element $p_{ij}$ is different for every $j \in [N]$.
\end{lemma}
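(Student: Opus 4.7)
The plan is to use the polyhedral characterization from Lemma~\ref{lemma:polytope}: every extremal $\bm P$ is a vertex of $\mathcal M_N(\epsilon)$, so it is determined by exactly $N^2$ linearly independent active constraints drawn from the $N$ row-sum equalities, the $N^2$ PML inequalities \eqref{subeq:pmlconstraints}, and the $N^2$ non-negativity inequalities \eqref{subeq:nonnegativityconstraint}. Under the high-privacy regime, Lemma~\ref{lem:privacyregiondisclosure} with $k=1$ forces every entry of $\bm P$ to be strictly positive, so no non-negativity constraint is active. The row sums contribute $N$ always-active, linearly independent constraints, and hence the remaining $N(N-1)$ linearly independent active constraints must all be PML constraints.

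Next, I would bound the number of active PML constraints in any single column. If all $N$ PML constraints in column $j$ were active, then $p_{ij} = e^{\epsilon} P_Y(y_j)$ for every $i$, and averaging with $P_X$ would give $P_Y(y_j) = e^{\epsilon} P_Y(y_j)$, forcing $P_Y(y_j) = 0$ for $\epsilon > 0$ and contradicting full output support. (The degenerate case $\epsilon = 0$ is vacuous since no vertex of $\mathcal M_N(0)$ has full output support for $N \geq 2$.) Hence every column contributes at most $N-1$ active PML constraints, so at most $N(N-1)$ in total. Combined with the count from the previous paragraph, equality must hold, so every column contains exactly $N-1$ active PML constraints and one non-extremal entry $p_{i^*_j,j}$.

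To establish that the indices $\{i^*_j\}_{j=1}^N$ are all distinct, I would argue by contradiction via a null-space perturbation. Assume $i^*_{j_1} = i^*_{j_2} = i^*$ for some $j_1 \neq j_2$, and set $\beta \coloneqq \frac{e^{\epsilon} P_X(x_{i^*})}{1 - e^{\epsilon}(1 - P_X(x_{i^*}))}$, which is well-defined and positive in the high-privacy regime because $e^{\epsilon}(1 - P_X(x_{i^*})) < 1$. Construct $\delta \bm P$ by setting $\delta p_{i^*, j_1} = 1$ and $\delta p_{i, j_1} = \beta$ for $i \neq i^*$, placing the negatives of these values in column $j_2$, and zeroing out all other entries. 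The defining identity $\beta = e^{\epsilon}[P_X(x_{i^*}) + (1 - P_X(x_{i^*}))\beta]$ is exactly what is needed to preserve the active PML equalities in columns $j_1$ and $j_2$; the other columns are untouched, and the row-sum equalities hold by sign cancellation between $j_1$ and $j_2$. The resulting nonzero perturbation therefore lies in the null space of the active-constraint matrix, contradicting the vertex condition. Hence no two columns share a non-extremal index, so $\{i^*_j\}_{j=1}^N$ is a permutation of $[N]$.

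The main obstacle will be verifying the algebraic identity showing that the constructed $\beta$ precisely matches the perturbation of $P_Y(y_{j_1})$ enforced by the active PML equalities; this is a one-line computation but it is the crux of the null-space construction. Everything else---the counting of active constraints, the positivity of $\beta$ in the high-privacy regime, and the sign cancellations in the row sums---is routine bookkeeping.
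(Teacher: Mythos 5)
Your proof is correct, and the first half (each column has exactly $N-1$ active PML constraints) follows essentially the same counting-and-contradiction path as the paper, differing only in bookkeeping: you bound each column by $N-1$ directly and force equality from the total, while the paper assumes some column has $N-k$ active constraints ($k\geq 2$), infers the existence of a column with $N$ active constraints, and rules that out. Both rely on the same core observation that a column with all $N$ constraints active would force $P_Y(y_j)=0$ (equivalently $\epsilon=0$), and both invoke Lemma~\ref{lem:privacyregiondisclosure} to exclude active non-negativity constraints.

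For the distinct-indices claim, however, you take a genuinely different and considerably heavier route. The paper's argument is a one-line pigeonhole: if $r(j_1)=r(j_2)$, the set $\{r(j):j\in[N]\}$ has fewer than $N$ elements, so some row $i^*$ is extremal in every column, and then
\begin{equation}
\sum_{j\in[N]} p_{i^*j} \;=\; \sum_{j\in[N]} e^{\epsilon} P_Y(y_j) \;=\; e^{\epsilon} \;>\; 1,
\end{equation}
violating row-stochasticity. Your null-space perturbation is valid --- the defining identity
\begin{equation}
\beta \;=\; e^{\epsilon}\bigl[P_X(x_{i^*}) + (1-P_X(x_{i^*}))\beta\bigr]
\end{equation}
does check out, the denominator of $\beta$ is strictly positive in the high-privacy regime because $e^{\epsilon}(1-P_X(x_{i^*}))\le e^{\epsilon}(1-P_X(x_N))<1$, and the sign cancellation with column $j_2$ handles the row sums --- but it constructs explicitly what the paper obtains implicitly: the rank deficiency of the active-constraint matrix. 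What your route buys is a constructive certificate that the point is not a vertex, which could be useful if one wanted to move along the edge toward a better solution; what it costs is the extra algebra and the need to carefully verify that the inactive constraints (the PML constraint at row $i^*$ and the box constraints) remain strictly slack for small $\gamma$, a point you gloss over. I would note the simpler row-sum argument is available and arguably preferable here.
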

\begin{proof}
    Firstly, assume $\epsilon > 0$, since otherwise there is nothing to prove and we have the trivial optimal solution of $X \indep Y$, that is, $P_{Y|X=x}(y) = \text{const}$. We prove this lemma by contradiction: For each outcome $j \in [N]$, we use $r(j) \in [N]$ to denote the index of a non-extremal element. We prove that all elements in column $j$ of the mechanism are equal except for the $r(j)$-th element. That is, 
    \begin{equation}
        p_{ij} = p_{i'j}, \quad \forall i,i' \in [N] \text{ and } i,i' \neq r(j).  
    \end{equation}
    We also show that $r(j) \neq r(j') \text{ for } j\neq j'$. 
    
    Note that as a result of Lemma \ref{lem:privacyregiondisclosure}, mechanisms in the high-privacy region $\epsilon < \log \frac{1}{1-p_{\min}}$ cannot contain elements equal to zero or one (except if we have an all-zeros column). Therefore, when $\supp(P_Y) = \supp(P_X)$, a mechanism can only be extremal by fulfilling exactly $N(N-1)$ of the PML constraints \eqref{subeq:pmlconstraints} with equality. Now, assume  there exists some outcome $y_j$ that fulfills $N-k$ of the PML constraints with $k \geq 2$. Then, in order to meet $N(N-1)$ privacy constraints, there must be $(k-1)$ outcomes each satisfying $N$ privacy constraints with equality. Suppose \modified{w.l.o.g. that} $y_1$ is such an outcome. Then, $p_{i1} = p_{i'1}$ for all $i,i' \in [N]$ and 
    \begin{equation}
    \label{eq:allNPMLcontradiction}
        \ell(X\rightarrow y_1) = \log\biggl(\frac{p_{11}}{p_{11}\sum_x P_X(x)}\biggr) = 0,
    \end{equation}
    i.e., $y_1$ meets no PML constraints, which is a contradiction. This shows that each outcome of the privacy mechanism must satisfy exactly $N-1$ PML constraints.  
    
    What is left to show is that $r(j) \neq r(j') \text{ for } j\neq j'$ holds. Again, we can construct a simple contradiction: Assume there exists $j \neq j'$ such that $r(j) = r(j')$. Since each column needs to fulfill $N-1$ PML constraints, there will be one row for which all elements are extremal, resulting in a violation of row-stochasticity. For this specific row, say at index $i \in [N]$, \modified{due to the assumption that $\epsilon > 0$} we get 
    \begin{equation}
        \sum_{j} p_{ij} = \sum_{j} e^{\epsilon} P_Y(y_j) = e^{\epsilon} > 1, 
    \end{equation}
    which yields a contradiction, as desired. 
\end{proof}

\modified{Lemma \ref{lem:optimalstructure} shows that} for each $j \in [N]$, $N-1$ of the elements in that column take the same value $p_{ij} = E(j)$, with $E(j)$ denoting the extremal value in column $j$. The remaining element $p_{r(j)j}$ takes the value ensuring that the row-sum constraints are met in each row. There are $N!$ such combinations, that all belong to the same equivalence class according to Remark \ref{remark:equivalenceclass}, that is, they are \modified{identical} up to column permutations. To illustrate this, consider $N=3$. Two of the possible solutions are
\begin{equation}
    \begin{bmatrix}
        p_{r(1)1} & E(2) & E(3) \\
        E(1) & p_{r(2)2} & E(3) \\
        E(1) & E(2) & p_{r(3)3}
    \end{bmatrix}, \quad 
    \begin{bmatrix}
         E(3) & E(2) & p_{r(1)1} \\
         E(3) & p_{r(2)2} & E(1)\\
        p_{r(3)3} & E(2) & E(1)
    \end{bmatrix}.
\end{equation}

\begin{lemma}
\label{lem:loweralphabetoptimalstructure}
A mechanism with $\supp(P_Y)=\supp(P_X)-k$, that is, with $k$ all-zero columns, can only be \modified{extremal} if it is composed of the elements $p_{ij} \in \{E(j),\Tilde{E}(j)\}, \, \forall j \in [N]$, where $E(j)$ meets the PML constraint with equality for the corresponding column $j$, and $\Tilde{E}(j)$ ensures the mechanism's row-stochasticity. Further, in each column $j$ there are $m_j$ elements $\Tilde{E}(j)$, with
\begin{equation}
    m_j \in \{1,\dots,k+1\}, \quad \sum_{j \in [N]}m_j = N,
\end{equation}
and \modified{each} row contains exactly one of those elements.
\end{lemma}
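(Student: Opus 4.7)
The plan is to combine a linear-independence count of active constraints with the row-stochasticity equations to pin down the claimed structure. Extremality of $\bm{P}$ requires $N^2$ linearly independent active constraints: the $k$ all-zero columns contribute $Nk$ non-negativity equalities $p_{ij}=0$, row-stochasticity supplies $N$ more equalities among the $N(N-k)$ remaining variables, and since Theorem~\ref{thrm:highprivacymechanism} places us in the high-privacy regime, Lemma~\ref{lem:privacyregiondisclosure} implies that every non-zero column has strictly positive entries; hence the remaining $N(N-k-1)$ required active constraints must come from PML equalities $p_{ij}=e^{\epsilon}P_Y(y_j)$ in the non-zero columns.

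First, I would show that $m_j \geq 1$ for every non-zero column. Otherwise, $p_{ij}=e^{\epsilon}P_Y(y_j)$ for every $i$, and averaging against $P_X$ gives $P_Y(y_j)=e^{\epsilon}P_Y(y_j)$, which forces $\epsilon=0$. Writing $s_j=N-m_j$ for the number of extremal entries in non-zero column $j$, the constraint count $\sum_{j\,\text{non-zero}} s_j = N(N-k-1)$ then yields $\sum_{j\,\text{non-zero}} m_j = N$, and since there are $N-k$ non-zero columns each with $m_j \geq 1$, this forces $m_j \leq k+1$.

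Next, I would show that every row contains exactly one non-extremal entry. Let $q_i$ denote the number of non-extremal entries of row $i$ across non-zero columns. If some row had $q_{i_0}=0$, row-stochasticity for $i_0$ would become $\sum_{j\,\text{non-zero}} e^{\epsilon} P_Y(y_j) = e^{\epsilon} = 1$, again forcing $\epsilon=0$. Hence $q_i \geq 1$, and combined with $\sum_i q_i = \sum_j m_j = N$, we obtain $q_i=1$ for every $i$.

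Finally, let $j(i)$ denote the unique column carrying the non-extremal entry of row $i$. Row-stochasticity then yields
\begin{equation*}
    p_{i,j(i)} = 1 - \sum_{j \neq j(i),\; j\,\text{non-zero}} E(j),
\end{equation*}
whose right-hand side depends only on $j(i)$ and not on $i$. Hence all non-extremal entries in the same column $j$ share a common value $\tilde{E}(j)$, so each column consists of exactly two distinct values $E(j)$ and $\tilde{E}(j)$ with exactly $m_j$ entries equal to $\tilde{E}(j)$. The main subtlety I anticipate is verifying that the counted active constraints are in fact linearly independent (so that the degree-of-freedom bookkeeping genuinely characterizes extreme points); once that is in place, everything else follows directly from row-stochasticity and the scalar identity $\sum_{j\,\text{non-zero}} P_Y(y_j) = 1$.
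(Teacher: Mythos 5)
Your proof is correct and follows essentially the same constraint-counting route as the paper: count $Nk$ active non-negativity constraints from the zero columns, $N$ from row-stochasticity, conclude the remaining $N(N-k-1)$ active constraints must be PML equalities in the strictly-positive columns, and then use $\epsilon>0$ to rule out an all-extremal column (giving $m_j\geq 1$) and an all-extremal row (giving $q_i\geq 1$). One small presentational quibble: you assert $\sum_j s_j = N(N-k-1)$ by assuming \emph{exactly} $N^2$ active constraints, but extremality only forces \emph{at least} $N^2$; the clean version is $\sum_j s_j \geq N(N-k-1)$ giving $\sum_j m_j \leq N$, and the reverse inequality $\sum_j m_j = \sum_i q_i \geq N$ then comes from your row argument, pinning down equality and $q_i=1$ simultaneously. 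You also make explicit two steps the paper handles tersely (deriving $\sum_j m_j = N$ and checking that all non-extremal entries in a column share a common value $\tilde{E}(j)$), and you honestly flag the linear-independence subtlety that the paper also passes over.
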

\begin{proof}
     Without loss of generality, assume a mechanism for which the $k$ last columns are all-zero columns. Consider the first column (which is \modified{strictly positive}). Clearly, from equation \eqref{eq:allNPMLcontradiction}, the maximum number of extremal elements in any non-zero column is upper bounded by $N-1$. The number of overall inequality constraints all columns except the first one can fulfill with equality is therefore upper bounded by
     \begin{equation}
         kN + (N-k-1)(N-1) = N^2 - N - (N - (k + 1)). 
     \end{equation}
     Recall that in order to be an extreme point, any mechanism needs to fulfill $N^2 - N$ inequality constraints with equality. This yields the lower bound on the number of extremal elements in the first column to be $N - (k+1)$. This applies identically to all of the first $N-k$ columns.
     \modified{Hence $m_j \in [k+1]$. Further, the fact that each row can only contain one of the elements $\Tilde{E}(j)$ follows from the same argument as in Lemma \ref{lem:optimalstructure}.}
\end{proof} 

For illustration, consider again the case $N=3$: Assuming that the column meeting $N-2$ constraints is column $j=2$, we get 
\begin{equation}
    \begin{bmatrix}
        E(1) & \Tilde{E}(2) & 0 \\
        E(1) & \Tilde{E}(2) & 0 \\
        \Tilde{E}(1) & E(2) & 0
    \end{bmatrix}, \quad 
    \begin{bmatrix}
         \Tilde{E}(1) & E(2) & 0 \\
         E(1) & \Tilde{E}(2) & 0\\
        E(1) & \Tilde{E}(2) & 0
    \end{bmatrix}
\end{equation}
as two example configurations.

\begin{lemma}
\label{lem:mergeisoptimal}
    \modified{Consider an $N \times N$} extremal mechanism $\boldsymbol{P}$ structured as in Lemma \ref{lem:optimalstructure} and a stochastic mapping $\boldsymbol{W}$, which merges two of $\boldsymbol{P}$'s non-zero columns $\bm P_u$, $\bm P_v$ into one. Then the mechanism $\boldsymbol{W}\circ\boldsymbol{P}$\footnote{\modified{We denote by $\bm W \circ \bm P$ the operation of applying $\bm W$ to the output of $\bm P$.}} is also extremal. 
\end{lemma}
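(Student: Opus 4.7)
The plan is to verify that the merged mechanism $\bm W \circ \bm P$ fits the structural description of extremal mechanisms with exactly one all-zero column, as characterized by Lemma~\ref{lem:loweralphabetoptimalstructure} with $k=1$. Since $\bm W$ implements a merge of columns $u$ and $v$, without loss of generality we may take $\bm W$ to collapse $y_u$ and $y_v$ onto $y_u$ and fix all other outputs. Writing $\bm P' \coloneqq \bm W \circ \bm P$, this yields $p'_{iu} = p_{iu}+p_{iv}$, $p'_{iv}=0$, and $p'_{ij}=p_{ij}$ for $j \notin \{u,v\}$, and correspondingly $P'_Y(y_u) = P_Y(y_u)+P_Y(y_v)$, $P'_Y(y_v)=0$, with $P'_Y(y_j)=P_Y(y_j)$ otherwise.

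Next I would check the PML structure column by column. For each $j \notin \{u,v\}$ the column is unchanged, so it still contains $N-1$ extremal entries equal to $e^{\epsilon}P'_Y(y_j)$ and one non-extremal entry at row $r(j)$. For the merged column, invoke Lemma~\ref{lem:optimalstructure}, which guarantees $r(u)\neq r(v)$. Then for every $i \notin \{r(u),r(v)\}$ both original entries are extremal, so
\begin{equation}
p'_{iu} = e^{\epsilon}P_Y(y_u)+e^{\epsilon}P_Y(y_v) = e^{\epsilon}P'_Y(y_u),
\end{equation}
which is extremal in $\bm P'$. For $i=r(u)$ we instead obtain $p'_{r(u),u} = p_{r(u),u}+e^{\epsilon}P_Y(y_v)$, which is strictly less than $e^{\epsilon}P'_Y(y_u)$ because $p_{r(u),u}<e^{\epsilon}P_Y(y_u)$ (this strict inequality is guaranteed in the high-privacy regime, since otherwise the $u$-th column would be fully extremal, contradicting equation~\eqref{eq:allNPMLcontradiction}). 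The symmetric argument handles $i=r(v)$. Hence the merged column contains exactly $N-2$ extremal entries and two non-extremal entries $\Tilde{E}(u)$.

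Finally, I would tally the equality constraints to confirm extremality. The all-zero column $v$ contributes $N$ non-negativity equalities; the $N-2$ untouched non-zero columns contribute $(N-2)(N-1)$ PML equalities; and the merged column contributes $N-2$ more PML equalities. Together with the $N$ row-stochasticity equalities, this gives
\begin{equation}
N + (N-2)(N-1) + (N-2) + N = N^{2}
\end{equation}
equality constraints, which is precisely the number required of a vertex of the polytope $\mathcal M_N(\epsilon)$. Equivalently, $\bm P'$ matches Lemma~\ref{lem:loweralphabetoptimalstructure} with $m_u=2$, $m_v=0$, $m_j=1$ for $j \notin\{u,v\}$, and $\sum_j m_j = N$, so $\bm P'$ is extremal.

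The main obstacle is the non-obvious observation that merging two non-zero columns of an extremal mechanism produces \emph{exactly} two new non-extremal entries rather than one or three; this hinges crucially on the disjointness $r(u)\neq r(v)$ supplied by Lemma~\ref{lem:optimalstructure}, together with the strict sub-extremality of $p_{r(u),u}$ and $p_{r(v),v}$ in the high-privacy regime. Once these two facts are in hand, the remainder is a direct verification.
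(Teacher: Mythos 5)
Your proof is correct and takes essentially the same approach as the paper's: verify that the merge preserves the structure and tally active equality constraints to conclude vertexhood. The reasoning is sound — in particular the key step of invoking $r(u) \neq r(v)$ from Lemma~\ref{lem:optimalstructure} and the observation that $N-2$ extremal entries in columns $u,v$ combine into $N-2$ extremal entries of the merged column, while the two non-extremal entries stay strictly below $e^{\epsilon}P'_Y(y_u)$, so the merged column still meets the PML constraint with equality at its maximum.

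Where you diverge from the paper: the paper's proof is written for columns $u,v$ with $N-k$ and $N-l$ extremal elements respectively (general $k,l \geq 1$), yielding the count $N^2 - N - ((N-k)+(N-l)) + N + (N-(k+l)) = N^2-N$, whereas you specialize to $k=l=1$, which is exactly what the hypothesis ``structured as in Lemma~\ref{lem:optimalstructure}'' demands. Your reading is therefore tighter and more faithful to the stated hypothesis. The paper's added generality is, however, what it implicitly leans on when Lemma~\ref{lem:mergeisoptimal} is invoked inside Lemma~\ref{lem:mergeexactlyonecolumn} on mechanisms of output support $N-k+1 < N$, where merged columns can already have several non-extremal elements — so the paper effectively proves (and uses) a stronger statement than the lemma records. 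Your proof of the lemma as written is complete; you just wouldn't be able to re-use it verbatim for that inductive step without extending the argument to $k,l>1$, which is a straightforward adaptation of what you wrote. One minor shared gap: both you and the paper count active constraints ($N^2$ total, or equivalently $N^2-N$ inequalities) without noting that vertexhood requires $N^2$ \emph{linearly independent} active constraints; in this setting they are, but neither argument records it.
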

\begin{proof}
    \modified{Fix two columns $u,\,v \in [N]$ and a mapping $\bm W$ merging these two columns into one column corresponding to a new outcome $z$.} Assume \modified{that} the two columns \modified{$u$ and $v$ satisfy} $N-k$ and $N-l$ PML constraints with equality, respectively. \modified{Recall that $\sum_{i}p_{ij}P_X(x_i) = P_Y(y_j)$ and} $(\bm P_j)_i = P_{Y|X=x_i}(y_j) \coloneqq p_{ij}$. Then from the PML constraints we have 
    \begin{equation}
        E(j) \coloneqq \max_{i \in [N]}\,p_{ij} = e^{\epsilon} P_Y(y_j), \, j=u,v 
    \end{equation}
    and therefore 
    \begin{equation}
    \label{eq:summeetsPML}
    \begin{aligned}
         \exp(\ell(X\to z)) &= \frac{E(u) + E(v)}{\sum_i(\bm P_u + \bm P_v)_i P_X(x_i)} \\ &= \frac{E(u) + E(v)}{P_Y(y_u) + P_Y(y_v)} \\&= \frac{e^{\epsilon}P_Y(y_u) + e^{\epsilon}P_Y(y_v)}{P_Y(y_u) + P_Y(y_v)} = e^{\epsilon}.
    \end{aligned}
    \end{equation}
    Further, by Lemma \ref{lem:loweralphabetoptimalstructure} the \modified{indexes} $r(j)$, $r(j')$ of the non-extremal elements are different for any two columns $j \neq j'$. Because of this, there will be a total of $k+l$ of non-extremal elements in the merged column. Now, from the assumption that $\boldsymbol{P}$ is extremal, we know that it fulfills $N^2 - N$ inequality constraints with equality. Since the column merge yields an additional all-zero column fulfilling $N$ non-negativity constraints with equality, the mechanism $(\boldsymbol{W} \circ \boldsymbol{P})$ will meet
    \begin{equation}
    \begin{aligned}
        N^2 - N - &((N-k) + (N-l)) \\&+ N + (N - (k+l)) = N^2 - N
    \end{aligned}
    \end{equation}
    inequality constraints with equality. That is, the mechanism \modified{$(\bm W \circ \bm P)$ is also extremal}.
\end{proof}

\begin{lemma}
\label{lem:mergeexactlyonecolumn}
     Any extremal mechanism \modified{$Q_{Z|X}$ with output support size $|\supp(Z)| = N-k$} can be obtained from an extremal mechanism $P_{Y|X}$ with \modified{output support size $|\supp(P_Y)| = N-k+1$} by merging two of its columns into one. 
\end{lemma}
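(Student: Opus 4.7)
The plan is to give a constructive proof by un-merging a suitable column of $Q_{Z|X}$ into two columns of $P_{Y|X}$. By Lemma \ref{lem:loweralphabetoptimalstructure}, the extremal $Q_{Z|X}$ is characterized by a partition $\{R_z\}$ of $[N]$ among its $N-k$ non-zero columns, where $|R_z| = m_z \in [k+1]$ marks the non-extremal positions of column $z$ and $\sum_z m_z = N$. Since this sum is concentrated on only $N-k$ columns, pigeonhole guarantees the existence of at least one non-zero column $z^*$ with $m_{z^*} \geq 2$, and this is the column we will split.

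We partition $R_{z^*} = R_u \cup R_v$ into two disjoint non-empty subsets with $|R_u|, |R_v| \in [k]$; this is feasible since $m_{z^*} \leq k+1$. Let $S_u := \sum_{i \in R_u} P_X(x_i)$ and $S_v := \sum_{i \in R_v} P_X(x_i)$. We build $P_{Y|X}$ by keeping every column of $Q$ other than $\bm Q_{z^*}$ unchanged and substituting $\bm Q_{z^*}$ together with one of the $k$ all-zero columns by two non-zero columns $\bm P_u, \bm P_v$ with marginals $P_Y(y_u) = S_u$ and $P_Y(y_v) = S_v$. In each new column, rows $i \notin R_{z^*}$ receive the extremal values $p_{iu} = e^\epsilon S_u$ and $p_{iv} = e^\epsilon S_v$; rows $i \in R_u$ receive the non-extremal $p_{iu} = 1-e^\epsilon(1-S_u)$ and the extremal $p_{iv} = e^\epsilon S_v$; and symmetrically for $i \in R_v$. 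A direct calculation using $S_u + S_v = P_Z(z^*)$ (which itself follows from the identity $P_Z(z)=S_z$ for extremal $Q$) verifies $\bm P_u + \bm P_v = \bm Q_{z^*}$, so merging columns $u$ and $v$ of $P$ via the appropriate stochastic mapping recovers $Q$.

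To confirm that $P$ is extremal, we count tight inequality constraints: the $k-1$ remaining all-zero columns contribute $(k-1)N$ non-negativity equalities, the $N-k-1$ unchanged non-zero columns contribute their original $\sum_{j \neq z^*}(N-m_j)$ tight PML constraints, and the two new columns contribute $(N-|R_u|)+(N-|R_v|) = 2N - m_{z^*}$ tight PML constraints. Using $\sum_j m_j = N$, the total equals $N^2 - N$, so $P$ is a vertex of $\mathcal M_N(\epsilon)$, and the PML upper bound holds at every entry of $P$ by construction.

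The main obstacle is verifying that the new non-extremal entries are non-negative, i.e., $S_u, S_v \geq 1 - e^{-\epsilon}$; from $Q$'s extremality only the weaker bound $S_{z^*} = S_u + S_v \geq 1 - e^{-\epsilon}$ follows directly. I expect to resolve this by refining the choice of $z^*$ and the partition using the prior ordering $P_X(x_1) \geq \cdots \geq P_X(x_N)$: whenever some column $z$ of $Q$ has $m_z = 1$, extremality of $Q$ forces $P_X(x_i) \geq 1 - e^{-\epsilon}$ for the lone index $i \in R_z$, and monotonicity of the priors then transfers this bound to any index we might isolate inside a split of $z^*$. A short case analysis on the profile of $m_z$ values across non-zero columns (noting that $\sum_z m_z = N$ over only $N-k$ columns forces singleton columns whenever $k < N/2$) should complete the argument.
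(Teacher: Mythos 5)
Your high-level strategy matches the paper's: both proofs locate a non-zero column of $Q_{Z|X}$ with $m \geq 2$ non-extremal elements (guaranteed by pigeonhole since $\sum_z m_z = N$ over only $N-k$ columns), split its non-extremal index set into two non-empty pieces, and assemble a new mechanism $P_{Y|X}$ whose two fresh columns have $N-s$ and $N-t$ extremal entries with $s + t = m$, leaving all other columns untouched. The paper's version is terser---it does not exhibit the entries explicitly or count tight constraints---so your explicit construction and the verification that $P$ meets exactly $N^2-N$ inequality constraints with equality are a welcome addition and your arithmetic there is correct.

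However, the paragraph you label the ``main obstacle'' is off the mark, and as written it leaves your proof incomplete. The worry you raise (that $S_u, S_v \geq 1 - e^{-\epsilon}$ might fail, so the non-extremal entries $1 - e^\epsilon(1-S_u)$ could be negative) has a one-line resolution that you overlooked: this lemma lives entirely inside the proof of Theorem~\ref{thrm:highprivacymechanism}, so $\epsilon < \epsilon_1(P_X) = -\log(1 - P_X(x_N))$, i.e.\ $P_X(x_N) > 1 - e^{-\epsilon}$. Since $R_u$ is non-empty, $S_u \geq P_X(x_N) > 1 - e^{-\epsilon}$ automatically, and likewise for $S_v$; no case analysis on singleton columns, no appeal to prior monotonicity, and no condition like $k < N/2$ is needed. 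Your proposed route is also genuinely incomplete as stated (it needs $Q$ to already have a column with $m_z=1$, which need not hold), so you should replace that paragraph with the direct high-privacy bound. With that substitution, your proof is correct and in fact more careful than the paper's, which leaves the non-negativity of the reconstructed entries implicit.
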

\begin{proof}
    Suppose the first column of \modified{$Q_{Z|X}$ has $N-m$ extremal elements}, where $1<m \leq k+1$. Then there exists $s,t \geq 1$ such that $m = s+t$. Hence, we can construct $P_{Y|X}$ such that it has a column \modified{with $N-s$ extremal elements}, and an additional column \modified{with $N-t$ extremal elements}, while all other columns are identical to the columns in $Q_{Z|X}$. With this, the mechanism $P_{Y|X}$ has an output \modified{support} size one larger than \modified{$Q_{Z|X}$}. At the same time, Lemma \ref{lem:mergeisoptimal} shows that we can obtain \modified{$Q_{Z|X}$} from the mechanism $P_{Y|X}$ constructed in this way by merging the two newly constructed columns into one. From Lemma \ref{lem:mergeisoptimal} we also know that, if $P_{Y|X}$ is extremal, \modified{$Q_{Z|X}$} is also extremal.
\end{proof}

\begin{lemma}
\label{lem:nooptimalzerocolumns}
    Assume that a mechanism $P_{Y|X}$ satisfies $\epsilon$-PML and has no all-zero column, that is, $P_{Y|X=x_i}(y_j) \neq 0, \, \forall i,j \in [N]$. Assume further that $P_{Y|X}$ is extremal in the sense that it follows the structure presented in Lemma \ref{lem:optimalstructure}. Then any mechanism \modified{$Q_{Z|X}$} that satisfies $\epsilon$-PML and contains an \modified{all-}zero column (i.e., $\exists j \in [N]: Q_{Z|X=x_i}(z_j) = 0, \, \forall i \in [N]$) will not have higher utility given any sub-convex utility function. That is, we have
    \begin{equation}
        U(Q_{Z|X}) \leq U(P_{Y|X}).
    \end{equation}
\end{lemma}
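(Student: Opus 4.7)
The plan is to combine Lemma \ref{lem:mergeexactlyonecolumn} with the data-processing inequality (DPI) for sub-convex utility functions mentioned just after Definition \ref{def:row-sum-sub-linear}. Specifically, since $Q_{Z|X}$ is extremal and has at least one all-zero column, suppose its effective output support size is $N - k$ for some $k \geq 1$. By iteratively applying Lemma \ref{lem:mergeexactlyonecolumn}, I can construct a finite chain of extremal mechanisms
\begin{equation}
P_{Y|X} = R^{(0)}_{Y|X} \to R^{(1)}_{Y|X} \to \cdots \to R^{(k)}_{Y|X} = Q_{Z|X},
\end{equation}
where each $R^{(j+1)}_{Y|X}$ is obtained from $R^{(j)}_{Y|X}$ by merging two non-zero columns into one (with the other columns preserved) and thereby reducing the output support size by exactly one. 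The construction is possible because $P_{Y|X}$ already has the maximal output support size $N$, so there are enough non-zero columns available at each intermediate step.

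Each column merge can be written as a deterministic stochastic mapping $\bm W^{(j)}$ applied to the output of $R^{(j)}_{Y|X}$, i.e., $R^{(j+1)}_{Y|X} = \bm W^{(j)} \circ R^{(j)}_{Y|X}$. This realizes the Markov chain $X - Y^{(j)} - Y^{(j+1)}$, and therefore the DPI for sub-convex functions yields
\begin{equation}
U(R^{(j+1)}_{Y|X}) \leq U(R^{(j)}_{Y|X}), \quad j = 0, 1, \ldots, k-1.
\end{equation}
Chaining these inequalities gives $U(Q_{Z|X}) = U(R^{(k)}_{Y|X}) \leq U(R^{(0)}_{Y|X}) = U(P_{Y|X})$, which is the desired conclusion.

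The main thing to verify carefully is that the chain of merges can indeed terminate at the specified $Q_{Z|X}$: Lemma \ref{lem:mergeexactlyonecolumn} only guarantees that any extremal mechanism with output support size $N - k$ arises from some extremal mechanism with output support size $N - k + 1$ via a single merge, so one must argue (by induction on $k$) that starting from $P_{Y|X}$ with full output support, the reverse process reaches exactly $Q_{Z|X}$. This follows by applying the lemma in reverse: starting from $Q_{Z|X}$, repeatedly ``split'' an extremal column with $N-m$ extremal elements into two columns with $N-s$ and $N-t$ extremal elements where $s+t = m$, which is always feasible whenever $m \geq 2$, until the full-support mechanism is recovered. Because the splitting operation preserves extremality at every step (again by Lemma \ref{lem:mergeisoptimal}), the resulting chain consists entirely of extremal mechanisms and terminates at a mechanism of the form described in Lemma \ref{lem:optimalstructure}, to which $P_{Y|X}$ belongs up to column permutations (which do not affect utility, as noted in Remark \ref{remark:equivalenceclass}).

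The main obstacle is formally justifying this splitting/merging equivalence of extremal mechanisms with different output support sizes without getting tangled in the combinatorics of Lemma \ref{lem:loweralphabetoptimalstructure}. Once that structural fact is pinned down, the DPI argument is immediate and the conclusion follows in one line.
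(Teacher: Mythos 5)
Your proposal is correct and takes essentially the same route as the paper: both rely on Lemma \ref{lem:mergeexactlyonecolumn} applied inductively to express any extremal mechanism with reduced output support as a stochastic degradation (via iterated column merges) of an extremal mechanism with full output support, and then close the argument with the data-processing inequality for sub-convex functions. You spell out the chain of merges more explicitly and invoke Remark \ref{remark:equivalenceclass} to tie the top of the chain back to $P_{Y|X}$—a detail the paper leaves implicit—but the underlying argument is identical.
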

\begin{proof}

    Lemma \ref{lem:mergeexactlyonecolumn} shows inductively that any \modified{extremal mechanism} can be expressed by recursively merging two columns of an extremal mechanism with \modified{output support size $N$} into one while keeping all other columns as they are. In other words, for each extremal mechanism $\boldsymbol{Q}$ with support size $N-k$ there exists an extremal mechanism $\boldsymbol{P}$ with support size $N-k+1$ and a kernel $\boldsymbol{W}$ such that $\boldsymbol{Q} = \boldsymbol{W} \circ \boldsymbol{P}$. Therefore, by the data processing inequality for sub-convex functions \cite[Proposition 17]{extremalmechanismLong}, extremal mechanisms with \modified{output support size smaller than $N$} cannot \modified{achieve} higher utility than extremal mechanisms with \modified{output support size equal to $N$}.
\end{proof}

As previously pointed out, Lemma \ref{lemma:polytope} implies that the optimal solution to the optimization problem \eqref{eq:generalOPTproblem} is one of the extremal mechanisms characterized in the above lemmas. Lemma \ref{lem:nooptimalzerocolumns} then shows that in the high-privacy regime, all extremal mechanisms with $M<N$ can be disregarded. Notice that, given one of the \modified{mechanisms} presented in Lemma \ref{lem:optimalstructure}, the values of $E(j)$ and $\Tilde{E}(j)$ are unique. Therefore, the solution to the maximization problem is unique up to column-permutations of the structured matrices. Since column permutations of a mechanism preserve its utility, \modified{any mechanism satisfying the conditions of Lemma \ref{lem:optimalstructure} is an optimal solution.} Noticing that the mechanism $P^*_{Y|X}$ in Theorem \ref{thrm:highprivacymechanism} has the required structure, and meets $N(N-1)$ inequality constraints with equality, proves that all mechanisms in its equivalence class $[P^*_{Y|X}]$ are optimal in the high-privacy regime.  \qed

\subsection{Proof of Theorem \ref{thrm:allregionseqprioropt}}
\label{app:allregionseqprioropt}

For notational simplicity, denote by $\bm \pi$ and $\bm \rho$ the probability mass functions $P_X$ and $P_Y$, respectively.

Using the homogeneity of sub-linear functions,  we can upper bound any sub-convex utility as
\begin{align}
    U(\bm P) &= \sum_{j=1}^N \mu(\bm P_j) = \sum_{j=1}^N \mu \biggl(\rho_j \bm \lambda_j \biggr) \\ &= \sum_{j=1}^N \rho_j\mu\biggl(\bm \lambda_j \biggr) 
        \leq \max_{j \in [N]}\,\mu\biggl( \bm \lambda_j \biggr), 
\end{align}
        where $(\lambda_{ij}) \eqqcolon \bm \Lambda$ denotes the \textit{lift-matrix}, which we define using the information density $i(x;y)$ as 
        \begin{equation}
            \lambda_{ij} \coloneqq \exp(i(x_i;y_j)) \quad \forall i,j \in [N].
        \end{equation}
        Further, from the PML constraints we have
        \begin{equation}
        \label{eq:postriorbound}
            i(x_i;y_j) \leq \epsilon \quad \forall i,j \in [N]
        \end{equation}
        and since $\mu$ is convex and symmetric, it is Schur-convex \cite{majorization}. Fix some arbitrary $j \in [N]$. Under the given constraints ($\epsilon$-PML, row-stochasticity), $\mu$ will be maximized by the vector $\bm \lambda^*_j$ that majorizes all other vectors $\bm \lambda_j$ satisfying these constraints for some $j \in [N]$. That is, we have $\bm \lambda_j^* \succeq \bm \lambda_j$.
         Since $\epsilon$ is in the $k^{\text{th}}$ privacy region, we know that it can contain no more than $k-1$ zero elements. Further, by \eqref{eq:postriorbound}, the maximum value each of the elements in $\bm \Lambda$ can take is $e^{\epsilon}$. Let $[\bm \lambda_j]$ denote the set of all element permutations of $\bm \lambda_j$ for some fixed $j$. Then we have
        \begin{equation}
            [\bm \lambda^*_j] = [(\underbrace{e^{\epsilon},\dots,e^{\epsilon}}_{{N-k} \text{ times }},r,\underbrace{0,\dots,0}_{k-1 \text{ times }})^T],
        \end{equation}
        where we get the value of $r$ from the constraint
        \begin{equation}
            \sum_{x\in\mathcal X}P_{X|Y=y}(x) = \sum_{i=1}^N \lambda_{ij} \pi_i = \sum_{i = 1}^N\frac{\lambda_{ij}}{N} = 1    
        \end{equation}
        as $r = N - (N-k)e^{\epsilon}$. Note that the value of $r$ is independent of the value of $j$. Hence, we obtain $U(\bm{P}) \leq \mu(\bm \lambda^*_j)$ as an upper bound on the optimal utility. 
        It can be verified that the mechanism given in~\eqref{eq:allregionseqprioropt} attains this bound; thus, it is optimal. \qed


\subsection{Proof of Theorem \ref{thrm:LP}}
\label{app:LP}
Let $\bm \rho$, $\bm \pi$ and $\bm \Lambda$ be defined as in Appendix \ref{app:allregionseqprioropt}. First, consider the following reformulation of problem \eqref{eq:optim_prob_cardinality}:
\begin{subequations}
\begin{align}
\label{subeq:reformulatedOPT}
    \max_{\bm\Lambda, \bm \rho}\, U(\bm \Lambda, \bm \rho) &= \sum_{j=1}^N \mu \biggl(\rho_j \bm \lambda_j \biggr) = \sum_{j=1}^N \rho_j\mu\biggl(\bm \lambda_j \biggr) \\
    \label{subeq:P_yConstraint}
    \text{s.t.} \quad &\sum_{j= 1}^N \rho_j \lambda_{ij} = 1 \quad \forall i \in [N], \\
    \label{subeq:P_xConstraint}
    & \sum_{i=1}^N \pi_i \lambda_{ij} = 1 \quad \forall j \in [N], \\
    \label{subeq:LP-PMLconstraint}
    & 0 \leq \lambda_{ij}  \leq e^{\epsilon} \, \forall i,j \in [N].
\end{align}
\end{subequations}
Note that \eqref{subeq:P_yConstraint} and \eqref{subeq:P_xConstraint} together imply $\sum_y P_Y(y) = 1$. Next, we show that the columns of the optimal lift-matrix $\bm \Lambda^*$ belong to the set $\bm \Lambda^*(k,P_X)$. To see why, let $\bm \rho^*$ denote the optimal distribution in problem \eqref{subeq:reformulatedOPT}. Given this distribution, we find the values of $\bm \lambda^*_j$ by solving the problem
\begin{subequations}
\begin{align}
\label{subeq:reformulatedOPT2}
    \max_{\bm\Lambda}\,  U(\bm \Lambda, \bm \rho^*) &= \sum_{j=1}^N \rho^*_j\mu\biggl(\bm \lambda_j \biggr) \\
    \label{subeq:P_yConstraint2}
    \text{s.t.} \quad &\sum_{j= 1}^N \rho^*_j \lambda_{ij} = 1 \quad \forall i \in [N], \\
    \label{subeq:P_xConstraint2}
    & \sum_{i=1}^N \pi_i \lambda_{ij} = 1 \quad \forall j \in [N], \\
    \label{subeq:LP-PMLconstraint2}
    & 0 \leq \lambda_{ij}  \leq e^{\epsilon} \, \forall i,j \in [N].
\end{align}
\end{subequations}
Since we are maximizing a convex function over a \modified{bounded and convex} polytope, the optimal utility value in this setup is attained by a vertex of \modified{this} polytope. 

\modified{Fix $j$ and $\bm \lambda_j$.} To characterize the vertices, denote by $\Tilde{\bm \pi}(\bm \lambda_j)$ the subset of prior probabilities of all symbols to which $\bm \lambda_j$ assigns a non-zero lift-value, that is,
\begin{equation}
    \Tilde{\bm \pi}(\bm \lambda_j) \coloneqq \{\pi_i \in \bm \pi : \lambda_{ij} > 0\}.
\end{equation}

Then, substituting into \eqref{subeq:P_xConstraint2} and upper bounding with \eqref{subeq:LP-PMLconstraint2} yields the following condition on the probability mass of this set
\begin{equation}
    1 = \sum_{i=1}^N \pi_i \lambda_{ij} = \sum_{i=1}^{|\Tilde{\bm \pi}(\bm \lambda_j)|}\Tilde{\pi}_i( \bm \lambda_j)\lambda_{ij} \leq \sum_{i=1}^{|\Tilde{\bm \pi}(\bm \lambda_j)|}\Tilde{\pi}_i(\bm \lambda_j)e^{\epsilon},
\end{equation}
thus lower bounding the probability of any such subset implied by a vector $\bm \lambda_j$ satisfying $\epsilon$-PML as
\begin{equation}
\label{eq:nonzeroliftbound}
    \sum_{\pi \in \Tilde{\bm \pi(\bm \lambda_j)}}\pi \geq e^{-\epsilon}.
\end{equation}
Define the subset of lift-vectors with $l$ non-zero elements and fulfilling condition \eqref{eq:nonzeroliftbound} as 
\begin{equation}
    \Tilde{\bm \Lambda}(l) \coloneqq \{\bm \lambda_j: \sum_{\pi \in \Tilde{\bm \pi}(\bm \lambda_j)}\pi \geq e^{-\epsilon} \text{ and } |\Tilde{\bm \pi}(\bm \lambda_j)| = l\}.
\end{equation}
Note that, for determining the extremality conditions on these vectors, we can apply the same chain of arguments as \modified{used} in the proof of Theorem \ref{thrm:highprivacymechanism}. That is, there exists an optimal $N\times N$ mechanism $P^*_{Y|X}$ for which all $N$ columns meet exactly $N-1$ inequality constraints with equality. Denote the set of all such vectors in the set $\Tilde{\bm \Lambda}(l)$ as $\Tilde{\bm \Lambda}^*(l)$. Then we get \modified{the set of candidates for the columns of the optimal lift matrix} as $\bm V^*(k,P_X) = \bigcup_{l=0}^{k-1} \Tilde{\bm \Lambda}^*(l)$. 
Due to the convexity of the objective function, and given $\bm \rho^*$ as the optimal distribution on $Y$, it is possible to construct a maximizing solution $\bm \Lambda^*$ using only lift-vectors out of the set $\bm V^*(k,P_X)$. 

Now, all that is left to show is that the optimal distribution on $Y$ can be found by the original optimization problem. Assume the optimal lift-matrix $\bm \Lambda^*$ to be known and let the optimal utility values of column $j$ implied by this solution be denoted by $\mu^*_j \, \forall j \in [N]$. Then the objective function becomes 
\begin{equation}
    \max_{\bm \rho}\, U(\bm \Lambda^*,\bm \rho) = \sum_{j=1}^N \rho_j \mu^*_j,
\end{equation}
which is a linear function of $\bm \rho$. Together with the above derivations, this proves the result. 
\qed


\bibliographystyle{IEEEtranN}
\footnotesize
\bibliography{bibtex/bib/IEEEabrv, bibtex/bib/thisbib}

\end{document}